\newcommand{\eps}{\ensuremath{\varepsilon}}
\newcommand{\nat}{\mathbb N}
\newcommand{\sig}{\ensuremath{\Sigma}}
\newcommand{\lang}{\mathcal{L}}
\newcommand{\fresh}{\ensuremath{\textit{fresh}}}
\newcommand{\degloc}{\textit{dloc}}
\long\def\ifnodedefined#1#2#3{%
    \@ifundefined{pgf@sh@ns@#1}{#3}{#2}%
}
\tikzset{>=latex'}
\newcommand{\change}[1]{\textcolor{black}{#1}}
\renewcommand{\emptyset}{\varnothing}
\newcommand\footnoteref[1]{\protected@xdef\@thefnmark{\ref{#1}}\@footnotemark}
\newcommand{\tokenbox}[1]{%
  \tikz[baseline] {%
    \node[outer sep=0,inner sep=0,anchor=base west] (text) {\small \hskip.5pt #1\hskip1.3pt\vphantom{$\hat{I}$}};%
    \begin{pgfonlayer}{background}%
      \fill[white] (-.1em,-.2em) rectangle (text.north east);%
      \draw[overlay,very thick,red!20!white] (-.1em,-.2em) rectangle (text.north east);%
    \end{pgfonlayer}
  }%
}
\newcommand{\greenback}[1]{%
  \tikz[baseline] {%
    \node[outer sep=0,inner sep=0,anchor=base west] (text) {\small \hskip.5pt #1\hskip1.3pt\vphantom{$\hat{I}$}};%
    \begin{pgfonlayer}{background}%
      \fill[green,opacity=0.3] (-.1em,-.2em) rectangle (text.north east);%
    \end{pgfonlayer}
  }%
}
\newcommand{\visspace}{%
    \tikz[baseline] {%
        \draw (0,0.2ex) -- (0,0) -- (0.6ex,0) -- (0.6ex,0.2ex);
        \path (0,0) to (0.7ex,0);
    }%
}
\theoremstyle{remark}
\newtheorem{observation}{Observation}
\newcommand{\tok}{\ensuremath{\mathord{\,\raisebox{0ex}{\ensuremath{\wr}}\,}}}
\newcommand{\xto}[1]{\mathrel{\tikz[every node/.style={inner sep=0}] {%
	\node[overlay] (l) {$\scriptstyle #1$}; %
	\draw ($(l.south west)+(-3pt,-1pt)$) edge[->] ($(l.south east)+(5pt,-1pt)$);
	\node at ($(l.south)+(0,-2pt)$) {};
}}}
\title{Constructing a BPE Tokenization DFA\protect\footnote{This is a revised and extended version of~\cite{ciaa-tokenizing-automata}. Prefer citing that version whenever the extended content is unnecessary.}}
\author{Martin Berglund\inst{1} \and Willeke Martens\inst{1} \and Brink van der Merwe\inst{2,3}}
\institute{Department of Computing Science, Umeå University,\\\email{\href{mailto:mbe@cs.umu.se}{mbe@cs.umu.se}}, \email{\href{mailto:willeke.martens@cs.umu.se}{willeke.martens@cs.umu.se}} \and 
Department of Computer Science, Stellenbosch University, \email{\href{mailto:abvdm@cs.sun.ac.za}{abvdm@cs.sun.ac.za}} \and National Institute for Theoretical and Computational Sciences, South
Africa}
\begin{document}

\maketitle

\tikzset{
	autbase/.style={
		->,
		>=latex',
		node distance=1.75cm,
		every state/.style={minimum size=1.5em,font=\footnotesize,inner sep=1pt},
		initial text={},}
}

\tikzset{
	autwoarrows/.style={
		>=latex',
		node distance=1.75cm,
		every state/.style={minimum size=1.5em,font=\footnotesize,inner sep=1pt},
		initial text={},}
}

\begin{abstract}\label{sec:abstract}Many natural language processing systems operate over \emph{tokenizations} of text to address the open-vocabulary problem. In this paper, we give and analyze an algorithm for the efficient construction of deterministic finite automata (DFA) designed to operate directly on tokenizations produced by the popular byte pair encoding (BPE) technique. This makes it possible to apply many existing techniques and algorithms to the tokenized case, such as pattern matching, equivalence checking of tokenization dictionaries, and composing tokenized languages in various ways. The construction preserves some key properties of the automaton, and we use this to establish asymptotic bounds on the state complexity of the automata that result. Finally, we demonstrate how to construct an input-deterministic (subsequential) string-to-string transducer which precisely describes the relationship between strings and their correct tokenizations.
\end{abstract}

\section{Introduction}\label{sec:introduction}
Subword tokenization, which decomposes a string into smaller textual units, is a key strategy for handling the open-vocabulary problem in modern large language models. Notably, byte pair encoding (BPE) tokenization~\cite{sennrich}, grounded in data compression, is used in the context of prominent models like OpenAI's GPT series~\cite{gpt2} (underpinning ChatGPT~\cite{chatgpt}) and Meta's LLaMA series. The algorithm successively merges adjacent tokens according to a dictionary of rules, typically built based on token pair frequencies in a training corpus.

For example, the GPT-2 tokenization of this string \tokenbox{CIA}\tokenbox{A}\tokenbox{\visspace 2024}\tokenbox{\visspace in}\tokenbox{\visspace Ak}\tokenbox{ita}\tokenbox{,} \tokenbox{\visspace Japan} is indicated by the boxes. Common substrings like `2024', `in' and `Japan' are treated as single tokens, while `CIAA' and `Akita' are split into more common substrings, such as the well-known acronym `CIA'.
The GPT-2 dictionary can also generate the tokens \tokenbox{\visspace CI}, \tokenbox{AA}, \tokenbox{\visspace Aki}, and \tokenbox{ta}, which potentially form alternative tokenizations like \tokenbox{\visspace CI}\tokenbox{AA} for `CIAA' and \tokenbox{\visspace Aki}\tokenbox{ta} for `Akita'. However, merge rules in token dictionaries have distinct priorities, ensuring that the BPE algorithm, given in Algorithm~\ref{alg:hf}, maps every string to precisely one tokenization. In this case, the correct tokenization, is the one introduced earlier:\tokenbox{CIA}\tokenbox{A} and \tokenbox{\visspace Ak}\tokenbox{ita}.

While every string is mapped to a unique tokenization, the tokenization of a particular substring depends on its surrounding context. For example, \tokenbox{\visspace kit} is a single token when appearing on its own, but in \tokenbox{\visspace Ak}\tokenbox{ita}, the same substring is split across two tokens. This complicates traditional string processing tasks, such as pattern matching, typically concerned with strings
over a fixed alphabet of symbols, where every string is denoted by precisely
one sequence of symbols (i.e. it is based on the free monoid). As pointed out,
distinct sequences of tokens can contain the same substring. Searching for a pattern in the underlying string of a tokenization requires accounting for multiple possible tokenizations. %

Such automata are of great practical importance. Token sequences are already common in communication with language services, and if such services keep getting more popular, more and more text may be encoded in such a way. Specific use cases may include pattern matching for filtering (e.g., on the network or for safety systems), validating the correctness of tokenizations (incorrect tokenizations can severely confuse models, similarly to occurrences of extremely rare tokens~\cite{glitch-tokens}), and to \emph{perform} tokenizations or rewrite tokenized text directly. However, for practical applications it is important that the automata are not excessively large. In Section~\ref{sec:complexitybounds} we establish that the algorithm producing the automata preserves a certain measure of \emph{locality}, and that this measure also bounds the number of states the procedure can create, ensuring that the resulting automata are of modest size.

Often the representation of the correct tokenizations is not the primary interest, but rather the \emph{relation} between strings and their tokenization. This too can be efficiently represented, i.e.\ in Section~\ref{sec:transducer} we demonstrate how to obtain a string-to-string (or rather, string-to-\emph{tokenization}) transducer which represents this relation. It is further demonstrated that this transducer can be made subsequential (i.e.\ deterministic in its input), so it can be used to efficiently tokenize text.

\section{Notation and Basic Definitions}
An \emph{alphabet} $\Sigma$ is a finite set of symbols. As we also consider alphabets of strings (or `tokens'), we consistently refer to a normal alphabet consisting of `indivisible' symbols as a \emph{base alphabet.} Let $\Sigma^*$ denote the set of all strings (including the empty string $\eps$) over the base alphabet $\Sigma$ and let $\Sigma^+=\Sigma^*\setminus \{\eps\}$.

A \emph{token alphabet} $\Gamma$ over the base alphabet $\Sigma$ is a finite subset $\Gamma\subset \Sigma^+$ where $\Sigma \subseteq \Gamma$ and for all $w\in \Gamma$ with $|w|>1$ there are some $u,v\in \Gamma$ such that $w=uv$. We refer to the elements of such $\Gamma$ as \emph{tokens.} A sequence of tokens $u_1,\ldots,u_n \in \Gamma$ is denoted $u_1 \tok \cdots \tok u_n$ and called a \emph{tokenization}. The set of all sequences of tokens over $\Gamma$ is denoted $\Gamma^{\tok}$. For any arbitrary token alphabet $\Gamma$ and corresponding base alphabet $\Sigma$, we define $\pi : \Gamma^{\tok} \to \Sigma^*$ as $\pi(u_1\tok \cdots \tok u_n)=u_1 \cdots u_n$, i.e.\ the concatenation of the tokens. 
For a tokenization $\tau$, let $|\tau|$ denote the number of tokens in $\tau$, i.e.\ $|u_1\tok \cdots \tok u_n|=n$ for $u_1,\ldots,u_n\in \Gamma$. 

Observe that while both tokenizations and strings are sequences, they are different \emph{kinds} of sequences even when they contain the same elements. E.g., the tokenization $\alpha_1 \tok \cdots \tok \alpha_n \in \Gamma^{\tok}$ is distinct from the string $\alpha_1 \cdots \alpha_n$ despite $\alpha_1,\ldots,\alpha_n\in \Sigma$. That is, they are both obtained by using the same sequence of elements from $\Sigma$, but are combined with the concatenation and tokenization operators, respectively. The tokenization $\alpha_1 \tok \cdots \tok \alpha_n \in \Gamma^{\tok}$ with $\alpha_i\in\Sigma\subseteq\Gamma$, is called the \emph{base tokenization} of the string $\alpha_1\cdots\alpha_n$. 

To keep our exposition clear, we adopt some conventions. We let $\Sigma$ denote some base alphabet, and $\Gamma$ a token alphabet over $\Sigma$. When giving examples, we use $\Sigma=\{\alpha,\beta,\gamma,\ldots\}$. Furthermore, we let $\alpha, \beta, \gamma$ be variables denoting symbols (so elements from the base alphabet), let $u,v,w$ be strings or
tokens,
and $\tau,\phi$ tokenizations. In each case, we also reserve all sub-/super-scripted variants of these symbols for those same purposes. As such we have, for example, $\alpha_3 \in \Sigma$ and $\hat{\tau} \in \Gamma^{\tok}$. When writing $\phi=u\tok \tau \tok v$, we have that $\phi$ is a tokenization where the first token is $u$, the last token is $v$, and the intervening tokens form the tokenization $\tau$, so $|\phi|=|\tau|+2$.

\begin{definition}
    \label{defn:dfa}%
    A token deterministic finite automaton (DFA) is a quintuple $A=(Q,\Gamma,q_0,\delta,F)$ where:
    
    \begin{itemize}
        \item $Q$ is the finite set of \emph{states,}
        \item $\Gamma$ is the \emph{token alphabet,}
        \item $q_0\in Q$ is the \emph{initial state,}
        \item $\delta : Q\times \Gamma \to Q$ is the \emph{transition function,} which may be partial,
        \item $F\subseteq Q$ is the set of \emph{final states.}
    \end{itemize}
    A \emph{run} of $A$ is a sequence $q_1,u_1,q_2,u_2,q_3,\ldots,u_{n-1},q_{n}$ for some $n\ge 0$, $q_1,\ldots,q_n \in Q$, $u_1,\ldots,u_{n-1}\in \Gamma$, such that $\delta(q_{i},u_i)=q_{i+1}$ for all $1 \le i < n$. We denote such a run as $q_1 \xto{u_1} q_2 \xto{u_2} \cdots \xto{u_{n-1}} q_n$, or with the shorthand $q_1 \xto{u_1\tok\cdots\tok u_{n-1}}q_n$ if the intermediate states are not of interest. We say that such a run \emph{reads the string $\pi(u_1\tok \cdots \tok u_{n-1})$} or \emph{tokenization $u_1\tok\cdots\tok u_n$.} %
    
    The run is \emph{accepting} iff $q_1=q_0$ and $q_n\in F$. The \emph{language accepted by $A$,} denoted $\lang(A)\subseteq \Gamma^{\tok}$ is the set of all tokenizations $\tau$ for which there exists an accepting run $q_0 \xto{\tau} q_n$.

\end{definition}

Token DFAs give rise to some additional questions about determinism. Consider the automaton for the (finite) language $\{a\tok bc, a\tok b\tok c \}$, as shown in Fig.~\ref{fig:token-aut-non-base-det}.
\begin{figure}
    \centering
    \begin{tikzpicture}[autbase,node distance=1.25cm]
        \node[state, initial] (q0) {$q_0$};
        \node[state, right of=q0] (q1) {$q_1$};
        \node[state, right of=q1] (q2) {$q_2$};
        \node[state, accepting, right of=q2] (q3) {$q_3$};
        \node[state, accepting, below of=q1,yshift=.5em] (q4) {$q_4$};
        \draw (q0) edge[above] node{$a$} (q1)
        (q1) edge[above] node{$b$} (q2)
        (q2) edge[above] node {$c$} (q3)
        (q1) edge[left,pos=0.4] node{$bc$} (q4);
    \end{tikzpicture}
    \caption{A token DFA for the language $\{a\tok b \tok c, a\tok bc\}$ which indicates the existence of two distinct tokenizations of the string $abc$.}
    \label{fig:token-aut-non-base-det}
\end{figure}

This automaton is deterministic in the classical sense, but there are two distinct accepting runs which read $abc$ (i.e.\ $\pi(a\tok b\tok c)=\pi(a\tok bc)$). As we aim to use token DFAs to represent only correct BPE tokenizations of strings, where (as we will see later) this situation cannot occur, we introduce a stronger condition which precludes such situations.

\begin{definition}\label{def:pref_invariant}
A token DFA $A=(Q,\Gamma,q_0,\delta,F)$ is \emph{context-invariant} if for runs $q_1\xto{\varphi}q_n$ and $q_1'\xto{\varphi'}q_m'$ in $A$  
 for which $\pi(\varphi)=\pi(\varphi')$ (with $q_1$ or $q_1'$ not necessarily being equal to $q_0$), we have $\varphi=\varphi'$.\end{definition}

\begin{remark}\label{rem:pref_invariant}
    Context-invariance ensures that every string has a unique tokenization, e.g., the token DFA in Figure~\ref{fig:token-aut-non-base-det} is not context-invariant, as its two runs $q_0 \xto{a \tok b \tok c} q_3$ and $q_0 \xto{a\tok bc} q_4$ violate Definition~\ref{def:pref_invariant}. Context-invariance is a stronger property than simply requiring the uniqueness of tokenizations; consider the automaton in Figure~\ref{fig:unique-but-not-ci}.
    \begin{figure}[htb]
        \centering
        \begin{tikzpicture}[autbase,node distance=1.25cm]
            \node[state, initial] (q0) {$q_0$};
            \node[state, right of=q0,accepting] (q1) {$q_1$};
            
            \draw (q0) edge[loop above] node{$a$} (q0);
            \draw (q0) edge[above] node {$aa$} (q1);
        \end{tikzpicture}
        \caption[A token DFA with unique tokenizations that is not context-invariant]{A token DFA which has unique tokenizations $\{aa, a\tok aa, a\tok a\tok a\tok aa, \ldots\}$ but is not context-invariant, with runs such as $q_0 \xto{a\tok aa} q_1$ and $q_0 \xto{a \tok a \tok a} q_0$ violating the property.}
        \label{fig:unique-but-not-ci}
    \end{figure}
    Uniqueness is the special case of Definition~\ref{def:pref_invariant} where we only consider runs where $q_1=q_1'$ is the initial state, and both runs are accepting. However, as we will see, BPE tokenization does fulfill context-invariance, so we demonstrate this stronger property for the automata our algorithm produces.

  \end{remark}

Next, we define BPE tokenization as defined by~\cite{sennrich}, using the formal structure introduced in~\cite{berg2023}.
\begin{definition}
    A \emph{byte pair dictionary $D$} over $\Sigma$ is a sequence of pairs of tokens over $\Sigma$, denoted $D=[u_1\tok v_1,\ldots,u_n \tok v_n]$. We call each $u_i \tok v_i$ a \emph{rule,} and say that $u_i\tok v_i$ has \emph{higher priority} than $u_j \tok v_j$ if $i<j$.

    A dictionary is \emph{proper} if for each $j$ with $|u_j|>1$ there exists some $i<j$ such that $u_j=u_iv_i$, and, symmetrically, for each $j$ with $|v_j|>1$ there exists some $i<j$ such that $v_j=u_iv_i$.
    
    Observe that when $D$ is proper this makes $\Sigma\cup\{u_1v_1,\ldots,u_nv_n\}$ a valid token alphabet, which we call the \emph{token alphabet of $D$.}
\end{definition}

Unless otherwise stated, we assume that all dictionaries are proper. We adopt the convention that $D$ and all its sub-/superscripted variants, always refer to some (proper) dictionary. In~\cite{berg2023}, both SentencePiece~\cite{sentencepiece} and HuggingFace~\cite{huggingface-gpt-2-py} BPE tokenization semantics are described, but the semantics are demonstrated to coincide for proper dictionaries. Some of the later arguments are clearer for HuggingFace semantics, so we choose to use those in Algorithm \ref{alg:hf}. SentencePiece semantics are arguably \emph{more natural} however (in~\cite{berg2023} we term them `correct'), as they simply amount to picking the highest-priority applicable rule after each time two tokens are merged. That is, to obtain the SentencePiece BPE tokenization algorithm, lines 3 and 8 in Algorithm \ref{alg:hf} are removed, and line 5 is replaced by the following: ``Let $u_i\tok v_i$ be the highest priority rule in $D$ for which a decomposition, as specified in line 4, exists. Then, among the remaining decompositions, pick the unique one which minimizes $|\phi|$.''

\begin{algorithm}
    \caption{HuggingFace BPE tokenization}
    \label{alg:hf}%
    \begin{algorithmic}[1]
        \State \textbf{Input:} a string $\alpha_1 \cdots \alpha_n \in \Sigma^*$, a proper dictionary $D=[u_1\tok v_1,\ldots,u_m \tok v_m]$
        \State initialize $\tau=\alpha_1 \tok \cdots \tok \alpha_n$
        \For {each rule $u_i\tok v_i$ in $D$ in priority order: \label{alg:hf:loop}}
            \While {there are $\phi$ and $\phi'$ such that $\phi \tok u_i \tok v_i \tok \phi' = \tau$}
                \State choose such $\phi$ and $\phi'$ minimizing $|\phi|$
                \State update $\tau=\phi \tok u_iv_i \tok \phi'$
            \EndWhile
        \EndFor
        \State \textbf{Output:} the HuggingFace tokenization $\tau$.
    \end{algorithmic}
\end{algorithm} 

We denote the BPE tokenization of a string $w$ according to the dictionary $D$ as $\mathbb{T}^D(w)$. Note that $\mathbb{T}^\varnothing(\alpha_1 \cdots \alpha_n)$ yields $\alpha_1\tok \cdots \tok \alpha_n$, the so-called \emph{base tokenization} of $\alpha_1\cdots \alpha_n$. The definition of $\mathbb{T}^D$ is naturally extended to sets so that $\mathbb{T}^D(L)=\{\mathbb{T}^D(w) \mid w \in L\}$ for a language $L\subseteq \sig^*$. In particular, $\mathbb{T}^D(\sig^+)$ is the set of \emph{all} BPE tokenizations over $\Sigma^+$ according to dictionary $D$.

\begin{example}
    Let $D=[a\tok a, a\tok b,b\tok c,ab\tok c,bc \tok ab]$ and $w=aaaaacbcabc$. Then Algorithm~\ref{alg:hf} proceeds with the following steps:
    \[
    \begin{array}{c}
        \greenback{a\tok a}\tok a\tok a\tok a\tok c\tok b\tok c\tok a\tok b\tok c \;\Rightarrow\; 
        aa\tok \greenback{a\tok a}\tok a\tok c\tok b\tok c\tok a\tok b\tok c \;\Rightarrow\; \\
        aa\tok aa \tok a \tok c\tok b\tok c\tok \greenback{a\tok b}\tok c  \;\Rightarrow\; 
        aa\tok aa\tok a\tok c\tok \greenback{b\tok c}\tok ab \tok c \;\Rightarrow\; \\
        aa\tok aa\tok a\tok c\tok bc\tok \greenback{ab\tok c} \;\Rightarrow\; aa \tok aa\tok a \tok c\tok bc \tok abc
    \end{array}
    \]
    The first two rule applications in this example, happen in the inner loop of Algorithm~\ref{alg:hf} (in two places $a\tok a$ can be applied). Note also that $w$ has two substrings $bc$, but only one use of $b\tok c$ is possible since the second $b$ is used by the higher-priority rule $a \tok b$. Finally, the rule $bc \tok ab$ would be applicable after the fourth step, but $ab \tok c$ has higher priority and uses the $bc$ token. As such we have $\mathbb{T}^D(w)=aa\tok aa \tok a\tok c \tok bc\tok abc$.
\end{example}

\section{Construction Procedure}\label{sec:constructionprocedure}

We now give a procedure for building a token DFA $A$ for a given regular language $L$ and dictionary $D$ over $\Sigma$ such that $\mathcal{L}(A)=\mathbb{T}^{D}(L)$.

For $D=[u_1\tok v_1,\ldots, u_n\tok v_n]$ and $0\le i \le n$, define the dictionary $D_i$ as the prefix $[u_1\tok v_1,\ldots, u_i\tok v_i]$ so that $D_0=[\ ]$ and $D_n=D$. We inductively construct a sequence of DFAs $A_i$ satisfying $\mathcal{L}(A_i)=\mathbb{T}^{D_i}(L)$, for a regular language $L$. The base token DFA $A_0$ (accepting $\mathbb{T}^{\varnothing}(L)$) is trivially obtained from the string DFA for $L$ by changing all symbols in $\Sigma$ into their corresponding tokens in $\Gamma$. For $1\le i \le n$ we construct $A_i$ by merging the rule $u_i \tok v_i$ into $A_{i-1}$ as described in Algorithm~\ref{alg:merge}. Observe that whenever a new transition is created by Algorithm~\ref{alg:merge}, we update the (partial) transition function of the input token DFA, thus yielding a new token DFA.

\newcommand{\remap}{\ensuremath{\textit{remap}}}
\begin{algorithm}[htb]
    \newcommand{\qadd}{Q_{\textrm{add}}}
    \newcommand{\dadd}{\textit{add}_\delta}
    \newcommand{\drem}{\textit{rem}_\delta}
    \caption{Applying a Merge}\label{alg:merge}
    \begin{algorithmic}[1]
    \State \textbf{Input:} a context-invariant DFA $A=(Q,\Gamma,q_0,\delta,F)$, and a rule $u \tok v$
    \State let $S=\{(s_1,s_2,s_3)\in Q^3 \mid \delta(s_1,u)=s_2, \delta(s_2,v)=s_3\}$
    \State let $S_2=\{s_2 \mid (s_1,s_2,s_3)\in S\}$
    \For{$(s_1,s_2,s_3) \in S$}
    \State add new transition by defining $\delta(s_1,uv)=s_3$ \label{alg:merge:add-uv}
    \EndFor
    \State add $uv$ to $\Gamma$
    \For{$s_2 \in S_2$}
    \State create a fresh state from $s_2$, denote it $\fresh(s_2)$
    \State add $\fresh(s_2)$ to $Q$, and if $s_2\in F$ add $\fresh(s_2)$ to $F$ as well \label{alg:merge:add-fresh}
    \EndFor
    \For{every $s_2 \in S_2$}
        \If {$u\neq v$}\label{alg:merge:bgcp}
            \State add new transition by defining $\delta(\fresh(s_2),\alpha)=\delta(s_2,\alpha)$ for all $\alpha \in \Gamma\setminus \{v\}$\label{alg:merge:copy-to-s2p}
        \Else 
            \State add new transition by defining $\delta(\fresh(s_2),\alpha)=\delta(s_2,\alpha)$ for all ${\alpha \in \Gamma\setminus \{v,uv\}}$\label{alg:merge:remove-uv}
        \EndIf \label{alg:merge:endcp}
    \EndFor
    \For{$q\in Q$ with $\delta(q,u)\in S_2$}
    \State replace the transition by defining $\delta(q,u)=\fresh(\delta(q,u))$ \label{alg:merge:add-u}
    \EndFor
    \State output the resulting DFA
    \end{algorithmic}
\end{algorithm}
Algorithm~\ref{alg:merge} merges the rule $u \tok v$ into $A=(Q,\Gamma,q_0,\delta,F)$ by considering each $(s_1,s_2,s_3)\in Q^3$ where $\delta(s_1,u)=s_2$ and $\delta(s_2,v)=s_3$.
These runs reading $u\tok v$ are replaced with ones reading $uv$ instead, by adding the transition $\delta(s_1,uv)=s_3$ (line \ref{alg:merge:add-uv}) and removing the transition $\delta(s_1,u)=s_2$ (line~\ref{alg:merge:add-u}). However, this might inadvertently eliminate runs not containing the sequence $u\tok v$, i.e.\ runs that read a $u$ to transition to $s_2$, but then take a different transition (or no transition) from $s_2$ than the transition on $v$. This is remedied by adding a new state $\fresh(s_2)$ in line \ref{alg:merge:add-fresh}, which inherits all outgoing transitions from $s_2$ except the ones on $v$ and potentially $uv$ in lines \ref{alg:merge:bgcp}-\ref{alg:merge:endcp}. 
In particular, if $u=v$, a transition on $uv$ from $s_2$ should \emph{not} be copied to $\fresh(s_2)$, as this would create the run $s_1 \xto{u} \fresh(s_2) \xto{uu} q$, which should never be part of a run in the DFA produced by~Algorithm~\ref{alg:merge}.
This can most easily be seen by inspecting Algorithm~\ref{alg:hf}, the only way a token $uu$ can be created is by the rule $u\tok u$ being applied, and it is then applied first on the left, producing $uu \tok u$. Formal correctness proofs follow in Section~\ref{sec:correctnessproof}. 
Example~\ref{ex:build} illustrates both the case where $u=v$ and the transition is \emph{not} copied from $s_2$ to $\fresh(s_2)$ (the step from $A_0$ to $A_1$), and the case where $u\ne v$ and the transition \emph{is} copied (the step from $A_1$ to $A_2$, observe that the token $ba$ occurs on two transitions in $A_2$).

\begin{remark}\label{rem:useless}
Algorithm~\ref{alg:merge} will at times produce useless states, which might simply be trimmed afterwards. This does not happen, for example, in the case where
Algorithm~\ref{alg:merge} is applied to a token DFA with all states being accepting.  Notably, this is the case when Algorithm~\ref{alg:merge} is applied to the (one-state) universal token DFA (i.e.\ $A_0$ in Figure~\ref{fig:constructionprocedureex}); or more generally, on token DFA obtained from repeatedly applying Algorithm~\ref{alg:merge} (zero or more times), starting with the universal token DFA.
\end{remark}

\begin{example}
    \label{ex:build}
    For the byte pair dictionary $D=[a\tok a,b\tok a]$ over $\Sigma=\{a,b\}$, constructing the token DFA $A$ that accepts $\mathbb{T}^D(\Sigma^*)$, requires two iterations. Starting from the base token DFA $A_0$ in Figure~\ref{fig:constructionprocedureex}, Algorithm \ref{alg:merge} is first applied with the merge $a\tok a$, yielding the depicted token DFA $A_1$. Next, Algorithm \ref{alg:merge} is applied to $A_1$ with the merge $b\tok a$, generating the token DFA $A_2$ which accepts $\mathbb{T}^D(\Sigma^*)$.
\begin{figure}[htb!]
     \centering
     \begin{tikzpicture}[autbase]
        \node[state, initial, accepting] (q0) {$q_0$};
        \draw (q0) edge[loop above] node{$a,b$} (q0);
        \node[yshift=-1.4cm] (label1) at (q0) {$A_0$ for $\mathbb{T}^\emptyset(\{a,b\}^*)$};

        \node[state, initial, accepting] (q0) at ($(q0)+(3cm,0)$) {$q_0$};
        \node[state, accepting, right of=q0] (q1) {$q_1$};
        \draw (q0) edge[loop above] node{$aa,b$} (q0)
        (q0) edge[bend left, above] node{$a$} (q1)
        (q1) edge[bend left, below] node (b) {$b$} (q0);
        \node at (label1-|b) {$A_1$ for $\mathbb{T}^{D_1}(\{a,b\}^*)$};
        \begin{scope}[node distance=2.4cm]
            \node[state, initial, accepting] (q0) at ($(q0)+(4cm,0cm)$) {$q_0$};
            \node[state, accepting, right of=q0] (q1) {$q_1$};
            \node[state, accepting, right of=q0, yshift=1.2cm, xshift=-1.2cm] (q2) {$q_2$};
            \node at ($(q0)+(0,4em)$) {};

            \draw (q0) edge[loop below] node{$aa$} (q0)
            (q0) edge[bend right=40, below] node {$a, ba$} (q1)
            (q0) edge[bend right=20, below] node{$b$} (q2)
            (q1) edge[loop below] node{$ba$} (q1)
            (q2) edge[bend left=20, above right] node{$ba$} (q1)
            (q2) edge[loop above] node{$b$} (q2)
            (q2) edge[bend right=20, above left] node {$aa$} (q0)
            (q1) edge[bend left=20, below] node{$b$} (q2);
            \node at (label1-|q2) {$A_2$ for $\mathbb{T}^{D}(\{a,b\}^*)$};
        \end{scope}
    \end{tikzpicture}
    \caption{The three token DFA constructed in Example~\ref{ex:build}, $A_0$ is the initial universal base token DFA, $A_1$ is $A_0$ with the rule $a\tok a$ merged, and $A_2$ is $A_1$ with the rule $b\tok a$ merged.}
    \label{fig:constructionprocedureex}
\end{figure}
\end{example}

\begin{remark}
    It is interesting to note when and how we obtain loops at $\fresh(s_2)$ when applying Algorithm~\ref{alg:merge}. This happens when we have a loop on $u$ (in the case where $u\not = v$). Two transitions on $u$ replace this loop, one going from $s_2$ to $\fresh(s_2)$ and the other being a loop on $\fresh(s_2)$. This happens by first in line 12 adding a transition from $\fresh(s_2)$ to $s_2$ (on $u$), and then in line 19, changing the target of this (newly added) transition to be $\fresh(s_2)$ rather than $s_2$, but also adding a transition from $s_2$ to $\fresh(s_2)$ on $u$ (in line 19). An example of this can be seen in Fig.~\ref{fig:constructionprocedureex} \change{$A_1$ to $A_2$}, with the loop on $b$ at $q_0$ in \change{$A_1$} being replaced by a transition from $q_0$ to $q_2$ and a loop at state $q_2$ (both on $b$) in \change{$A_2$}. Contrast this to with what happens when $u=v$, by considering the loop on $a$ at $q_0$ in \change{$A_0$}, and $\fresh(q_0)$ being $q_1$ in \change{$A_1$}.
\end{remark}

\section{Proof of Correctness}\label{sec:correctnessproof}

This section sets out to demonstrate that the procedure in Section~\ref{sec:constructionprocedure} yields the token DFA for the BPE tokenizations of the given language $L$ and dictionary $D$.

In the remainder of this section, we assume that the input DFA in Algorithm~\ref{alg:merge} is context-invariant. The following examples illustrate why this assumption is necessary.
\begin{example}\label{ex:pref-inv-required}
Applying Algorithm~\ref{alg:merge} to the non-context-invariant token DFA in Fig.~\ref{fig:token-aut-non-base-det} with rule $b\tok c$, results in a \emph{nondeterministic} automaton because state $q_1$ gains a second transition on the token $bc$. (But for instance, Lemma~\ref{lem:pairwisecompalt} remains valid otherwise.)\end{example}
\begin{example}
Now, consider the non-context-invariant token DFA  $A$ in Figure~\ref{fig:unique-but-not-ci}, and the rule $a\tok a$. The resulting automaton $A'$ is again nondeterministic. Moreover, Algorithm~\ref{alg:merge} removed, among others, the run on the string $aaa$ from $A'$, despite $a\tok aa\in\mathcal{L}(A)$.
\end{example}

It turns out that Algorithm~\ref{alg:merge} preserves context-invariance. However, before proving this, we first consider the following useful lemma that gets us most of the way in terms of the correctness of Algorithm~\ref{alg:merge}.

\begin{lemma}\label{lem:pairwisecompalt}
    For any rule $u \tok v$ and context-invariant DFA $A=(Q,\Gamma,q_0,\delta,F)$ applying Algorithm~\ref{alg:merge} produces a DFA $A'=(Q',\Gamma',q_0',\delta',F')$ such that there is a run $q_1' \xto{\tau'} q_m'$ in $A'$ if and only if there is a run $q_1 \xto{\tau} q_n$ in $A$ (with $q_1'$ and $q_1$ any state in $Q'$ and $Q$ respectively) where $\pi(\tau')=\pi(\tau)$. Moreover, either both runs start from an initial state or neither does, and either both ends in an accepting state or neither does. Consequently, the runs are either both accepting or rejecting.
\end{lemma}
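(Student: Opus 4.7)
The plan is to prove both implications of the biconditional by setting up a transition-level correspondence, and to deduce preservation of initial and accepting status from the fact that $q_0 = q_0'$ is untouched and that line~\ref{alg:merge:add-fresh} stipulates $\fresh(s_2) \in F'$ iff $s_2 \in F$. Let $\sigma : Q' \to Q$ be the retract with $\sigma(\fresh(s_2)) = s_2$ and $\sigma(q) = q$ otherwise. I first catalogue $\delta'$ by inspecting Algorithm~\ref{alg:merge}: (i) for $\beta \ne uv$, every transition $p \xto{\beta} p'$ in $A'$ corresponds to a transition $\delta(\sigma(p), \beta) = \sigma(p')$ in $A$, with the caveat that $\beta = u$ and $\sigma(p') \in S_2$ remaps the target via line~\ref{alg:merge:add-u} to $p' = \fresh(\sigma(p'))$; (ii) for $\beta = uv$, either the transition was already present in $A$, or it was added in line~\ref{alg:merge:add-uv} from a triple $(\sigma(p), s_2, \sigma(p')) \in S$, and context-invariance of $A$ rules out the two sub-cases coexisting for the same source state.

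For the forward direction, given $\rho = q_1 \xto{\alpha_1} \cdots \xto{\alpha_k} q_{k+1}$ in $A$, I construct $\rho'$ greedily, scanning left to right: whenever $\alpha_i = u$ and $\alpha_{i+1} = v$ I emit a single $uv$-transition and advance two positions, otherwise I emit the single $\alpha_i$-transition. The maintained invariant is that after processing $i$ positions the current state $p$ satisfies $\sigma(p) = q_{i+1}$, so $p \in \{q_{i+1}, \fresh(q_{i+1})\}$. The crux is verifying that every requested transition exists: the only outgoing transitions absent from $\fresh(s_2)$ are the one on $v$ (always) and the one on $uv$ (when $u = v$). Reaching $\fresh(q_i)$ means the previous greedy step took an unmerged $u$, which forces $\alpha_i \ne v$; and if $u = v$ and either the next demand is a single $uu$-transition or a merge of $u \tok u$, context-invariance furnishes a contradiction---two distinct tokenizations of $uu$ from $q_i$ in the former case, and a greedy merge that should already have happened at position $i{-}1$ in the latter.

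For the backward direction, given $\rho' = p_1 \xto{\beta_1} \cdots \xto{\beta_m} p_{m+1}$ in $A'$, I replay each $\beta_i$ through the catalogue: each $\beta_i \ne uv$ yields the single transition $\sigma(p_i) \xto{\beta_i} \sigma(p_{i+1})$ in $A$, and each $\beta_i = uv$ either yields the direct $uv$-transition (if pre-existing in $A$) or expands into $u \tok v$ via the triple of $S$ that introduced it. Context-invariance makes the dichotomy unambiguous, and concatenating yields a run of $A$ reading the same underlying string as $\rho'$. Since $q_0 = q_0'$ and $\sigma$ preserves accepting status (by line~\ref{alg:merge:add-fresh}), the correspondence respects initial-ness and accepting-ness on both sides.

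The principal obstacle is the forward direction: one must show that the greedy construction never requests a transition from $\fresh(q_i)$ that Algorithm~\ref{alg:merge} deliberately withheld. Every such apparent obstruction is eliminated by combining the greedy invariant (restricting the situations in which we can be at a fresh state at all) with context-invariance of $A$ (forbidding the ambient coincidences that would be needed). Once this is established, the backward direction and the preservation of initial and accepting status both follow by routine case analysis on the transition cases above.
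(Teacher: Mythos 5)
Your proposal is correct and takes essentially the same route as the paper's own proof: the same retraction $\sigma$ (the paper's $\gamma$), the same backward direction expanding each $uv$-step into $u \tok v$ via a triple of $S$, and a forward direction whose eager left-to-right merging yields exactly the run the paper constructs by induction (there the merge is performed lazily, by rewriting the preceding $u$-step upon encountering $v$ at a fresh state; the two rules coincide because $\alpha_i = u$ followed by $\alpha_{i+1} = v$ automatically places the intermediate state in $S_2$). The key facts you rely on are the same ones the paper uses: fresh states are entered only on $u$ and lack outgoing $v$-transitions (and $uv$-transitions when $u = v$), with context-invariance of $A$ disposing of the $u = v$ corner cases and of any pre-existing transitions on the token $uv$.
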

\begin{proof}
    The lemma trivially holds when in $A$ we have no run of the form $s_1 \xto{u} s_2 \xto{v} s_3$, since then $S=\emptyset$ in Algorithm~\ref{alg:merge}, producing an unchanged $A'=A$. Observe that the lemma therefore especially holds when $A$ contains a transition on $uv$ due to its presupposed context-invariance.
    
    To relate the runs of the automata, we relate their states. Observe that $Q'$ contains either one or two copies of each state from $Q$. Define $\gamma: Q'\to Q$ to recover the state in $A$ from which a state in $A'$ was created, so $\gamma(q)=q$ for $q\in Q$, and $\gamma(\fresh(q))=q$ for all $\fresh(q)\in Q'\setminus Q$ produced by line~\ref{alg:merge:add-fresh}.

    To find a run in $A$ corresponding to the run $q_1' \xto{u_1\tok\cdots\tok u_{m-1}} q_m'$ in $A'$, note that the run $\gamma(q_1') \xto{u_1\tok\cdots\tok u_{m-1}} \gamma(q_m')$ in $A$ is well-defined unless the original run in $A'$ contains a step of the form $q\xto{uv}q'$ since this, by Algorithm~\ref{alg:merge}, is the only transition not directly inherited from $\gamma(q)$. If such a step of the form $q\xto{uv}q'$ exists, then  $(\gamma(q),s,\gamma(q'))\in S$ for some $s\in Q$, and can be replaced by the two steps $\gamma(q)\xto{u}s\xto{v} \gamma(q')$ instead. Repeating this procedure for all steps of the form $q\xto{uv}q'$ produces a run in $A$. Given that $\gamma(q_0')=q_0$ and that $q'\in F'$ if and only if $\gamma(q')\in F$, it follows trivially that the runs agree on accepting (or not accepting).

    In the other direction, we proceed by induction on the length of runs to demonstrate that for all runs $q_1\xto{\tau}q_n$ in $A$, there exists a run $q_1'\xto{\tau'}q_m'$ such that $q_n=\gamma(q_m')$ and $\pi(\tau)=\pi(\tau')$. The base case is trivial since a zero-length run has $\pi(\tau)=\pi(\tau')=\eps$ and $q_1=\gamma(q_1')$ for some $q_1'\in Q'$. As the inductive step, extend an arbitrary run $r$ of length $n-1$ in $A$ with a transition $q_n\xto{w}q_{n+1}$. This changes the corresponding run of $r$ in $A'$ according to the following three cases depending on how $w$ relates to the rule $u\tok v$ being merged.
    \begin{enumerate}
        \item If $w\ne v$: the corresponding run is extended with a transition on $w$ to $\delta'(q_m',w)$, which respects the required relationship between the two runs given that $\delta(\gamma(q'),w)=\gamma(\delta'(q',w))$ for all $q'\in Q$ when $w\ne v$.
        \item If $w=v$ and $q_m'$ is a fresh state: by construction $\delta'(q_m',v)$ is undefined. However, observe that then the \emph{previous} step in the corresponding run \emph{must} be $\delta'(q_{m-1}',u)=q_m'$ (the only way to reach a fresh state). Consequently, $(\gamma(q_{m-1}'),\gamma(q_m'),q) \in S$ in Algorithm~\ref{alg:merge} for some $q\in Q$. Therefore, replace that previous step with $q_{m-1}' \xto{uv} q$. Note that when $u=v$ fresh states do not get transitions on $uv$ (see line~\ref{alg:merge:remove-uv}), but in such a case $q_{m-1}'$ cannot be fresh, as fresh states never have outgoing transitions on $v$.
        \item If $w=v$ and $q_m'$ is \emph{not} a fresh state: just like in case~1 then $\delta'(q_m',w)$ is defined and can be used.
    \end{enumerate}
    By the same argument as above, straightforward inspection of $\gamma$ establishes that either both runs or neither are accepting.
\end{proof}

Now we have the tools to prove that Algorithm \ref{alg:merge} preserves context-invariance.

\begin{lemma}\label{lem:prefixpreserved} 
    For any rule $u \tok v$ and context-invariant DFA $A$, applying Algorithm~\ref{alg:merge} produces a context-invariant DFA $A'$.
\end{lemma}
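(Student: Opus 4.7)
The plan is to deduce context-invariance of $A'$ from that of $A$ through Lemma~\ref{lem:pairwisecompalt}, using a key structural property of $A'$: no run of $A'$ can read the two-token sequence $u \tok v$, and, when $u = v$, no run can read $u \tok uv$ either. First I would establish this as a sub-lemma asserting that whenever $\delta'(p, u) = p'$ in $A'$, the transition $\delta'(p', v)$ is undefined (and, when $u = v$, $\delta'(p', uv)$ is undefined as well). The proof splits on whether $p' \in Q$ or $p' = \fresh(s_2)$ for some $s_2 \in S_2$. If $p' \in Q$, line~\ref{alg:merge:add-u} has already rerouted every $u$-transition whose target lay in $S_2$ to the corresponding fresh state, so $p' \notin S_2$; combined with the fact that $p'$ is $u$-reachable in $A$ (either via the original $\delta$ when $p \in Q$, or via the copy installed in lines~\ref{alg:merge:bgcp}--\ref{alg:merge:endcp} when $p = \fresh(s^*)$), this forces $\delta(p', v)$ and, when $u = v$, any $uv$-transition that line~\ref{alg:merge:add-uv} could have added at $p'$, to be undefined. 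If $p' = \fresh(s_2)$, lines~\ref{alg:merge:copy-to-s2p}--\ref{alg:merge:remove-uv} explicitly omit $v$ (and $uv$ when $u = v$) when copying $s_2$'s transitions.

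Next I would take two runs $q_1 \xto{\varphi} q_n$ and $q_1' \xto{\varphi'} q_m'$ in $A'$ with $\pi(\varphi) = \pi(\varphi')$ and push them through Lemma~\ref{lem:pairwisecompalt} to get corresponding runs $\tau, \tau'$ in $A$, each obtained by expanding every $uv$-token into the pair $u \tok v$. Since $\pi(\tau) = \pi(\tau')$, context-invariance of $A$ yields $\tau = \tau'$. To conclude $\varphi = \varphi'$ it suffices to show that $\varphi$ is uniquely recoverable from $\tau$ alone, regardless of the starting state: the sub-lemma forbids any adjacent $u \tok v$ in $\tau$ from originating in two standalone tokens of $\varphi$, so every such pair must come from a single $uv$-token of $\varphi$. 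When $u = v$, the extended sub-lemma analogously forbids a standalone $u$ of $\varphi$ from being followed by $uv$, so each maximal block of $k$ consecutive $u$'s in $\tau$ must correspond to $(uv)^{\lfloor k/2 \rfloor} \tok u^{k \bmod 2}$ in $\varphi$. A deterministic left-to-right scan of $\tau$, merging each $u \tok v$ pair it encounters into $uv$, therefore reconstructs $\varphi$, and identically reconstructs $\varphi'$ from $\tau' = \tau$, so $\varphi = \varphi'$.

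The main obstacle I anticipate is getting the scope of line~\ref{alg:merge:add-u} right: because line~\ref{alg:merge:add-fresh} has already enlarged $Q$ with the fresh states, the redirection in line~\ref{alg:merge:add-u} also fires on fresh states whenever their copied $u$-target sits in $S_2$. Missing this point would wrongly permit a run of the form $\fresh(s^*) \xto{u} s_2 \xto{v} s_3$ in $A'$, which would invalidate both the sub-lemma and context-invariance. Verifying that both exclusions in line~\ref{alg:merge:remove-uv}, for $v$ and for $uv$, are genuinely needed to close the $u = v$ case is the other delicate point, since otherwise spurious $u \tok uv$ sequences out of fresh states would spoil the unique reconstruction argument.
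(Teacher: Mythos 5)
Your proposal is correct and follows essentially the same route as the paper's proof: both map the two runs of $A'$ back to runs of $A$ via the correspondence of Lemma~\ref{lem:pairwisecompalt}, invoke context-invariance of $A$ to equate the pulled-back runs, and then close the argument using exactly the two structural exclusions of Algorithm~\ref{alg:merge} (no run of $A'$ reads $u\tok v$, and no run reads $u\tok uv$ when $u=v$). The only difference is presentational — the paper argues by contradiction at the first differing token, whereas you argue directly that these exclusions make the merged tokenization uniquely reconstructible from the pulled-back run — and your explicit sub-lemma (including the observation that line~\ref{alg:merge:add-u} also reroutes $u$-transitions out of fresh states) is a fleshed-out justification of facts the paper asserts tersely.
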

\begin{proof}
    By contradiction assume that $A'$ is \emph{not} context-invariant. Then $A'$ has runs on $\tau$ and $\varphi$ so that $\tau\ne\varphi$ but $\pi(\tau)=\pi(\varphi)$.

    Let $\tau=w_1\tok \cdots \tok w_n$ and $\varphi = w_1' \tok \cdots \tok w_m'$ where $i$ is the smallest index such that $w_i\ne w_i'$. The mapping defined in the proof of Lemma~\ref{lem:pairwisecompalt} instructs to replace all occurrences of $uv$ with $u\tok v$ in both $\tau$ and $\varphi$ to yield respective corresponding runs $\tau'$ and $\varphi'$ in $A$ that read $\pi(\tau)$. Since $A$ is context-invariant, it must be that $\tau'=\varphi'$, in turn implying $\{w_i,w_i'\}=\{u,uv\}$.
    Without loss of generality let $w_i=u$ and $w_i'=uv$. Then consider the token $w_{i+1}$, which by the defined mapping must have $v$ as a prefix, so either
    \begin{itemize}
        \item $w_{i+1}=v$, but this implies a run $q\xto{u}q'\xto{v} q''$ in $A'$, impossible by Algorithm~\ref{alg:merge}, or,
        \item $u=v$ and then $w_{i+1}=vv$, but line~\ref{alg:merge:remove-uv} prevents the introduction of a transition on $uv$ in the case where $u=v$ if $u$ has just been read.
    \end{itemize}
    No other alternatives for $w_{i+1}$ are possible since any other token with $v$ as a prefix is unaffected by the mapping (in which case $\tau'\ne\varphi'$).  As such, we reach a contradiction, i.e. no such $\tau$ and $\varphi$ exist.
\end{proof}

Next, we show the relationship between tokenizations of two corresponding accepting runs as shown to exist in Lemma~\ref{lem:pairwisecompalt}.

\begin{lemma}\label{lem:correct-tok}
    Let $A=(Q,\Gamma,q_0,\delta,F)$ be a context-invariant token DFA and $A'=(Q',\Gamma',q_0',\delta',F')$ the resulting token DFA of applying Algorithm~\ref{alg:merge} with rule $u\tok v$. Suppose that for a given accepting run $q_1\xto{\tau}q_n$ in $A$, the corresponding accepting run in $A'$  is $q_1'\xto{\tau'}q_m'$ where $\pi(\tau)=\pi(\tau')=w$. Then, if there exists a proper dictionary $D$ such that $\mathbb{T}^D(w)=\tau$, and the dictionary $D'$ obtained by concatenating the rule $u\tok v$ to $D$ is also proper, then $\mathbb{T}^{D'}(w)=\tau'$.
\end{lemma}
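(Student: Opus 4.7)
The plan is to compare the greedy merging done by the inner while loop of Algorithm~\ref{alg:hf} for rule $u\tok v$ with the inductive construction in the proof of Lemma~\ref{lem:pairwisecompalt}, and verify that they agree token by token so that the output of the former is exactly $\tau'$. Because $u\tok v$ is appended to $D$ as the lowest-priority rule of $D'$, running Algorithm~\ref{alg:hf} on $w$ with $D'$ is equivalent to starting from $\tau=\mathbb{T}^D(w)$ and then executing the inner while loop for the single rule $u\tok v$, which repeatedly locates the leftmost adjacent pair $u\tok v$ in the current tokenization and contracts it to $uv$ until no such pair remains. Call the result $\sigma$; the claim reduces to $\sigma=\tau'$.

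Next I would recast the construction of $\tau'$ in the proof of Lemma~\ref{lem:pairwisecompalt} as a left-to-right scan of $\tau$: at each position, we either advance the run in $A'$ by the next token of $\tau$ (cases~1 and~3 in that proof) or, upon landing on a fresh state with an impossible next $v$-transition (case~2), retroactively merge the last-read $u$ together with the current $v$ into a single token $uv$. The key observation is that case~2 fires exactly when two consecutive tokens of $\tau$ are $u$ followed by $v$: any such adjacency in $\tau$ corresponds to a triple in $S$, so line~\ref{alg:merge:add-u} redirects the $u$-transition into $\fresh(s_2)$, which by line~\ref{alg:merge:copy-to-s2p} or line~\ref{alg:merge:remove-uv} carries no outgoing $v$-transition. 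Hence the merges performed by the inductive construction coincide, position for position, with the leftmost merges chosen by the greedy loop.

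The main obstacle is the case $u=v$, in which a block $u\tok u\tok u$ admits two overlapping pairings and we must verify both procedures commit to the leftmost one. Algorithm~\ref{alg:hf} does so via its explicit minimisation of $|\phi|$; the inductive construction matches because after backtracking over positions~1 and~2 it resumes from the state reached via $uv$, so the remaining $u$ can be extended only by a plain $u$-transition---a further merge would require the fresh state to still carry a $uv$-transition, which line~\ref{alg:merge:remove-uv} explicitly removes precisely in the $u=v$ case. Context-invariance of $A$ is what forbids the other delicate situation, namely a fresh state reached by $u$ facing a next token $uv$, since such a configuration would yield two distinct $A$-runs reading $uv$ from the same state. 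With this corner in place, every other token in $\tau$ has its transitions inherited verbatim in $A'$ by Algorithm~\ref{alg:merge}, and we conclude $\sigma=\tau'=\mathbb{T}^{D'}(w)$.
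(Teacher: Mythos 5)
Your proposal is correct and takes essentially the same route as the paper's proof: both reduce $\mathbb{T}^{D'}(w)$ to a single greedy, leftmost application of the rule $u\tok v$ to $\tau=\mathbb{T}^D(w)$ (since $u\tok v$ is appended as the lowest-priority rule), and both then use the run correspondence from Lemma~\ref{lem:pairwisecompalt} to show that the passage from $\tau$ to $\tau'$ performs exactly these leftmost merges, with the $u=v$ overlap case settled by line~\ref{alg:merge:remove-uv}. Your write-up merely spells out in more detail the left-to-right scan that the paper compresses into an appeal to the mapping of Lemma~\ref{lem:pairwisecompalt}.
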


\begin{proof}
        From Lemma~\ref{lem:prefixpreserved}, it follows that $A'$ is context-invariant. Hence, we can use the mapping from the proof of Lemma~\ref{lem:pairwisecompalt}, to deduce that $\tau'$ can be obtained from $\tau$ by only altering the subtokenizations $u\tok v$:
        \begin{itemize}
            \item In case $u\neq v$, each such subtokenization is simply replaced by $uv$.
            \item In case $u=v$, $A'$ cannot read the token $u$ followed by a token $uu$, subsequently $u\tok u$ is replaced by $uu$ starting from the subtokenization's left-most occurrence.
        \end{itemize}  
        Now, let $D$ be a proper dictionary such that $\mathbb{T}^D(w)=\varphi$ and suppose $D'$ obtained by concatenating the rule $u\tok v$ to $D$ also is proper. Since Algorithm~\ref{alg:hf} applies each rule $u_i\tok v_i$ in order of priority to $w$, $\mathbb{T}^{D'}(w)$ can be directly obtained by applying the rule $u\tok v$ to $\tau$. When applying $u\tok v$ to $\tau$, we repeatedly decompose $\tau$ into $\phi\tok u\tok v\tok \phi'$ so that $|\phi|$ is minimal and replace $u\tok v$ with $uv$ until no such decomposition can be found. Clearly, this yields that $\mathbb{T}^{D'}(w)=\tau'$.
\end{proof}

Observe that Lemma~\ref{lem:correct-tok} does not hold without the requirement that the involved dictionaries have to be proper. First of all, if a dictionary $D$ is not proper, recall that the SentencePiece and HuggingFace semantics do not necessarily yield the same BPE tokenization, so $\mathbb{T}^{D}$ is not well-defined. Further, we demonstrate by an example that $\mathcal{L}(A)$, in such cases, does not necessarily correspond to either the SentencePiece or HuggingFace tokenizations.
\begin{example}
    Take the dictionary $D=[ab\tok a, a\tok b]$ (which is not proper) and let $L=\{a,b\}^*$. The generated token DFA $A$ is depicted in Fig. \ref{fig:nonproperdict}. (Merging the rule $ab\tok a$ into the base token DFA just results in the base token DFA as no runs on $ab\tok a$ are present.) Now, the SentencePiece tokenization of the string $w=ababababa$ is $aba\tok b\tok aba\tok b\tok a$ (see~\cite{berg2023}) and the HuggingFace tokenization (refer to Algorithm~\ref{alg:hf}) is $ab\tok ab\tok ab\tok ab\tok a$, which the token DFA $A$ obviously accepts. Strictly speaking, for non-proper dictionaries, HuggingFace is not forced to select rules in decreasing order of priority (where for proper, it will by default), but only select the highest priority applicable rule, and keep on applying it left to right (see~\cite{berg2023}). Thus, with this formulation, the HuggingFace tokenization will be $ab\tok ab\tok ab\tok aba$, and thus, token DFA $A$ accepts neither HuggingFace nor SentencePiece tokenization, since $\delta(\cdot,aba)$ is undefined for any given state in $A$.
\begin{figure}
    \centering
    \begin{tikzpicture}[autbase]
\node[state, initial, accepting] (q0) {$q_0$};
\node[state, accepting, right of=q0] (q1) {$q_1$};
\draw (q0) edge[loop above] node{$ab$, $b$} (q0)
(q0) edge[bend left, above] node{$a$} (q1)
(q1) edge[loop above, above] node {$a$} (q1)
(q1) edge[bend left, below] node{$ab$} (q0);
    \end{tikzpicture}
    \caption{Automaton $A$ generated for dictionary  $D=[ab\tok a, a\tok b]$ by applying the merge $ab\tok a$ followed by the merge $a\tok b$ to the base token DFA.}
    \label{fig:nonproperdict}
\end{figure}
\end{example}

Finally, we are ready to prove that if we use Algorithm~\ref{alg:merge} iteratively to produce each time the token DFA $A_i$, using $A_{i-1}$ and the rule $u_i\tok v_i$ from the proper dictionary $D=[u_1\tok v_1, \ldots, u_n\tok v_n]$ as input (with $A_0$ being the base token DFA of a string DFA $A$), then the language of $A_i$ is the tokenization of $\mathcal{L}(A)$
using $D_i=[u_1\tok v_1,\ldots u_i\tok v_i]$.

\begin{theorem}
    \label{thm:all-of-it}%
    Let $A$ be a string DFA over base alphabet $\Sigma$. Let $D$ be the proper dictionary $D=[u_1\tok v_1, \ldots, u_n\tok v_n]$ over $\Sigma$. Furthermore, let $A_0, \ldots, A_n$ be the sequence of token DFAs where $A_0$ is the base token DFA for $A$ and for each $i>0$, $A_i$ is produced by applying the merge $u_i\tok v_i$ to $A_{i-1}$. Then, $\mathcal{L}(A_i)=\mathbb{T}^{D_i}(\mathcal{L}(A))$ for all $i\in\{0,\ldots,n\}$.
\end{theorem}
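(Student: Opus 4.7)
The plan is to proceed by induction on $i$, using the three preceding lemmas as the main tools. The base case $i=0$ is immediate from the definition of the base token DFA: each accepting run of $A$ reading some $\alpha_1\cdots\alpha_n$ becomes an accepting run of $A_0$ reading $\alpha_1\tok\cdots\tok\alpha_n=\mathbb{T}^{\emptyset}(\alpha_1\cdots\alpha_n)$, so $\mathcal{L}(A_0)=\mathbb{T}^{D_0}(\mathcal{L}(A))$. I will also note that $A_0$ is trivially context-invariant, since every accepting run reads a base tokenization, and base tokenizations of distinct strings are distinct.

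For the inductive step, assume $\mathcal{L}(A_{i-1})=\mathbb{T}^{D_{i-1}}(\mathcal{L}(A))$ and that $A_{i-1}$ is context-invariant; Lemma~\ref{lem:prefixpreserved} then gives that $A_i$ is context-invariant, so the invariant is maintained throughout the induction. Properness of each $D_i$ (required to invoke Lemma~\ref{lem:correct-tok}) follows immediately since $D_i$ is a prefix of the proper dictionary $D$, and properness is a prefix-closed property by its inductive definition.

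For the two set inclusions, I will use Lemma~\ref{lem:pairwisecompalt} to transport accepting runs back and forth between $A_{i-1}$ and $A_i$, and Lemma~\ref{lem:correct-tok} to identify the resulting tokenization as $\mathbb{T}^{D_i}(w)$. Concretely: given $\tau'\in\mathcal{L}(A_i)$ with accepting run $q_0'\xto{\tau'}q_m'$, Lemma~\ref{lem:pairwisecompalt} produces a corresponding accepting run in $A_{i-1}$ reading some $\tau$ with $\pi(\tau)=\pi(\tau')=:w$; by the induction hypothesis $\tau=\mathbb{T}^{D_{i-1}}(w)$ with $w\in\mathcal{L}(A)$, and then Lemma~\ref{lem:correct-tok} yields $\tau'=\mathbb{T}^{D_i}(w)$, giving $\tau'\in\mathbb{T}^{D_i}(\mathcal{L}(A))$. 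For the reverse inclusion, any $w\in\mathcal{L}(A)$ has an accepting run in $A_{i-1}$ reading $\mathbb{T}^{D_{i-1}}(w)$ by the induction hypothesis; Lemma~\ref{lem:pairwisecompalt} lifts this to an accepting run in $A_i$ reading some $\tau'$ with $\pi(\tau')=w$, and Lemma~\ref{lem:correct-tok} identifies $\tau'=\mathbb{T}^{D_i}(w)$.

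The main obstacle is essentially bookkeeping rather than any deep new argument: all substantive work has been done in Lemmas~\ref{lem:pairwisecompalt}, \ref{lem:prefixpreserved}, and \ref{lem:correct-tok}, so the proof reduces to chaining them through the induction while being careful that (a) context-invariance is propagated to allow each next application of the lemmas, and (b) the properness hypothesis needed by Lemma~\ref{lem:correct-tok} is verified at each step via the prefix structure of $D$.
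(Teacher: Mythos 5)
Your proposal is correct and follows essentially the same route as the paper's own proof: induction on $i$, with Lemma~\ref{lem:prefixpreserved} propagating context-invariance, Lemma~\ref{lem:pairwisecompalt} transporting accepting runs between $A_{i-1}$ and $A_i$ in both directions, and Lemma~\ref{lem:correct-tok} identifying the transported tokenization as $\mathbb{T}^{D_i}(w)$. Your explicit remark that each prefix $D_i$ is proper (needed to invoke Lemma~\ref{lem:correct-tok}) is a small bookkeeping point the paper leaves implicit, but it does not change the argument.
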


\begin{proof}
    We proceed by induction. Let $L=\mathcal{L}(A)$. It is given that $\mathcal{L}(A_0)=\mathbb{T}^\varnothing(L)$. Assume that $\mathcal{L}(A_i)=\mathbb{T}^{D_i}(L)$. Observe that since $A$ is a string DFA, $A_0$ is trivially context-invariant. Hence, each of the token DFAs in the sequence $A_1,\ldots, A_n$ must also be context-invariant by Lemma~\ref{lem:prefixpreserved}. It further follows from Lemma~\ref{lem:pairwisecompalt} that for every $\varphi\in \mathcal{L}(A_i)$  there is some $\varphi'\in \mathcal{L}(A_{i+1})$ such that $\pi(\varphi)=\pi(\varphi')$. Moreover, since $\mathbb{T}^{D_i}(\pi(\varphi))=\varphi$, then $\mathbb{T}^{D_{i+1}}(\pi(\varphi))=\varphi'$ according to Lemma~\ref{lem:correct-tok}. So, for all $\varphi\in\mathbb{T}^{D_i}(L)$, there exists some $\varphi'\in \mathcal{L}(A_{i+1})$ such that  $\varphi'\in\mathbb{T}^{D_{i+1}}(L)$. It only remains to show that all $\psi\in \mathcal{L}(A_{i+1})$ are in $\mathbb{T}^{D_{i+1}}(L)$.  If $\psi\in \mathcal{L}(A_{i+1})$ then by Lemma~\ref{lem:pairwisecompalt} $\psi'\in \mathcal{L}(A_i)$ so that  $\pi(\psi)=\pi(\psi')$, meaning $\psi'\in\mathbb{T}^{D_i}(L)$. We derive $\psi\in\mathbb{T}^{D_{i+1}}(L)$ from Lemma~\ref{lem:correct-tok}, hence $\mathcal{L}(A_{i+1})=\mathbb{T}^{D_{i+1}}(L)$. This completes the induction step.
\end{proof}

The case where $A$ is a universal string DFA is of particular interest, as the produced token DFA then accepts the language of all correct tokenizations.

\section{Complexity Bounds and Properties Preserved}\label{sec:complexitybounds}
A cursory reading of Algorithm~\ref{alg:merge} might suggest that \emph{each} application of the algorithm could double the size of the token DFA. This indeed holds for DFAs in which all states belong to $S_2$ (using the notation from Algorithm~\ref{alg:merge}), such as the one-state universal automaton: Algorithm~\ref{alg:merge} creates a fresh copy of each state in $Q$. Thus, when applying Algorithm~\ref{alg:merge} repeatedly, as in Theorem~\ref{thm:all-of-it}, the number of states could increase by a factor $2^{|D|}$. However, this turns out to be too pessimistic, as the actual growth is bounded by an additive factor, which we will call \emph{the degree of locality} of the DFA. Importantly, Algorithm~\ref{alg:merge} does not increase the degree of locality nor, more generally, the degree of $k$-locality of a token DFA.
\begin{definition}
    \label{defn:deg-k-loc}%
   A token DFA $A=(Q,\Gamma,q,\delta,F)$ has \emph{degree of $k$-locality $l$,} for $k,l\in \nat$, if for all $\tau\in \Gamma^*$, with $|\tau|=k$, we have $|\{ q' \in Q \mid q\in Q, q \xto{\tau} q'\}|\le l$.
\end{definition}

That is, if a (token) DFA has degree of $k$-locality $l$, then for any tokenization $\tau$  with $|\tau|\geq k$, there are at most $l$ possible states that a run reading $\tau$ can reach, regardless of the starting state of that run.

Let $\degloc(A,k)$ denote the \emph{smallest} $l\in \nat$ such that $A$ has a degree of $k$-locality $l$.

\begin{observation}
    \label{obs:degloc-monotonic-decreasing}
     For any DFA $A$, we have $\degloc(A,i)\ge \degloc(A,j)$ for all $i<j$. That is, $A$ cannot contain \emph{more} runs for a longer tokenization than for any of its subtokenizations.
\end{observation}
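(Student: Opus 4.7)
The plan is to prove the equivalent inequality $\degloc(A,j) \le \degloc(A,i)$ by a direct factorization argument: every tokenization of length $j$ forces its runs through a ``bottleneck'' at the intermediate state reached after the first $i$ tokens, and determinism then caps the final reachable set by the size of that bottleneck.

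First I would fix $i < j$, set $l = \degloc(A,i)$, and pick an arbitrary tokenization $\tau$ with $|\tau| = j$. Then I would decompose $\tau = \tau_1 \tok \tau_2$ where $|\tau_1| = i$ and $|\tau_2| = j - i$. By determinism of $\delta$, every run $q \xto{\tau} q'$ factors uniquely as $q \xto{\tau_1} q'' \xto{\tau_2} q'$ for some intermediate $q''$, so each element of $R_\tau := \{q' \mid \exists q \in Q,\, q \xto{\tau} q'\}$ arises as the unique endpoint of a run $q'' \xto{\tau_2} q'$ starting from some $q''$ in $R_{\tau_1} := \{q'' \mid \exists q \in Q,\, q \xto{\tau_1} q''\}$. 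Since determinism gives each $q'' \in R_{\tau_1}$ at most one such endpoint, we obtain $|R_\tau| \le |R_{\tau_1}| \le l$. As $\tau$ was arbitrary, this yields $\degloc(A,j) \le \degloc(A,i)$, which is the desired inequality.

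I do not anticipate any real obstacle: the argument uses only the determinism of the transition function and the definition of $\degloc$. The only minor subtlety is that $\delta$ is partial, but undefined extended transitions on $\tau_2$ simply remove elements from the image and so can only tighten the bound further rather than threaten it.
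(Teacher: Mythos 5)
Your proof is correct. The factorization of $\tau=\tau_1\tok\tau_2$ with $|\tau_1|=i$, the observation that every run reading $\tau$ passes through a state in $E_{\tau_1}(A)$ after $i$ tokens, and the use of determinism to conclude that each such intermediate state contributes at most one endpoint, together give $|E_{\tau}(A)|\le |E_{\tau_1}(A)|\le \degloc(A,i)$; your handling of partiality of $\delta$ is also right, since undefined extensions only shrink the image. Note, however, that the paper states this as an observation with no written proof, and the justification implicit in its phrasing (``more runs for a longer tokenization than for any of its \emph{subtokenizations}'') points to a slightly different and more elementary argument: decompose $\tau=\tau_1\tok\tau_2$ with the \emph{suffix} $\tau_2$ of length $i$. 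Then every run reading $\tau$ ends with a run reading $\tau_2$, so $E_{\tau}(A)\subseteq E_{\tau_2}(A)$ outright, and the bound follows from this containment alone. That version needs no determinism whatsoever---it holds verbatim for nondeterministic automata---whereas your prefix factorization genuinely uses that each state has at most one $\tau_2$-successor. Both arguments are short and valid; yours is perfectly adequate in the DFA setting of the paper, while the suffix-containment route is the more general one.
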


Definition~\ref{defn:deg-k-loc} generalizes the notion of a $k$-local automaton, closely related to several other concepts. First, recall the definition of a $k$-local DFA from~\cite{locally-testable-languages} (which, in turn, draws on~\cite{beal-local}).
\begin{definition}
    A token DFA $A=(Q,\Gamma,q,\delta,F)$ is \emph{$k$-local}, for $k\in \nat$, if for all tokenizations $\tau\in \Gamma^*$ with $|\tau|=k$, we  have $|\{ q' \in Q \mid q\in Q, q \xto{\tau} q'\}|\le 1$.
\end{definition}
A $k$-local automaton is thus an automaton with \emph{degree of $k$-locality 1.} Of particular interest are the \emph{local languages}~\cite{local-berry-sethi} the languages which are recognized
    by 1-local automata (also called local automata).\footnote{
    The reader may also be familiar with the $k$-testable languages, a hierarchy of language classes characterized by a membership problem decidable by inspecting only the set of $k$-length substrings of the input. The hierarchy of languages accepted by $k$-local automata is closely related to the hierarchy of \emph{strictly $k$-testable languages}, in particular if a language has a $k$-local automaton it is strictly $k'$-testable for some $k'$, but the exact relationship between $k$ and $k'$ depends on the definition used~\cite{locally-testable-languages}.}

Before proceeding, we introduce the notation required in the remainder of this section. For a token DFA $A=(Q,\Gamma,q_0,\delta,F)$, let $E_{\tau}(A)$ denote the set of states reached last in all runs reading $\tau$. Thus, we have {$\degloc(A,k)=\max_{|\tau|=k}|E_{\tau}(A)|$.} Further, for a tokenization $\phi$, $E_{\tau,\phi}(A)$ is the subset of states in $E_{\tau}(A)$ from which a run reading $\phi$ starts, while $E_{\tau}^{\phi}(A)\subseteq E_{\tau}(A)$ is its complement. Observe the distinction between $E_{uv}(A)$ and $E_{u,v}(A)$. The former denotes the set of states having an incoming transition on a token $uv$, whereas the latter denotes the set of states with an incoming transition on $u$ \emph{and} an outgoing transition on $v$.

Henceforth, when using Algorithm~\ref{alg:merge} in results and proofs, we take as the input DFA $A=(Q,\Gamma,q_0,\delta,F)$ and the rule $u \tok v$ (unless explicitly specified otherwise). Similar to Lemma~\ref{lem:pairwisecompalt}, the results in the remainder of this section trivially hold when $A$ already contains a transition on $uv$ due to its presupposed context-invariance (in which case $A=A'$). Therefore, we assume $uv\notin\Gamma$ and let $A'=(Q',\Gamma\cup \{uv\},q_0,\delta',F')$. We do not trim useless states from $A'$.

\begin{lemma}\label{lem:degkloc-nonincreasing}
Assume that Algorithm~\ref{alg:merge} is applied to the context-invariant DFA $A$ to produce $A'$ as output. Then $\degloc(A',k)\le\degloc(A,k)$.
\end{lemma}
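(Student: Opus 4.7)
The plan is to use the state-correspondence $\gamma : Q' \to Q$ from the proof of Lemma~\ref{lem:pairwisecompalt} (where $\gamma(q)=q$ for $q\in Q$ and $\gamma(\fresh(s_2))=s_2$) to lift runs of $A'$ to runs of $A$. Fix an arbitrary $\tau' \in (\Gamma\cup\{uv\})^*$ with $|\tau'| = k$, and let $\tau \in \Gamma^*$ be obtained by expanding each occurrence of $uv$ in $\tau'$ into $u \tok v$, so that $\pi(\tau) = \pi(\tau')$ and $|\tau| \geq k$. Lemma~\ref{lem:pairwisecompalt} implies that every run $q_1' \xto{\tau'} q_m'$ in $A'$ induces a run $\gamma(q_1') \xto{\tau} \gamma(q_m')$ in $A$, giving $\gamma(E_{\tau'}(A')) \subseteq E_\tau(A)$.

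If I can show that $\gamma$ is injective on $E_{\tau'}(A')$, the bound follows at once from
\[
|E_{\tau'}(A')| \;=\; \bigl|\gamma(E_{\tau'}(A'))\bigr| \;\leq\; |E_\tau(A)| \;\leq\; \degloc(A, |\tau|) \;\leq\; \degloc(A, k),
\]
where the last inequality is Observation~\ref{obs:degloc-monotonic-decreasing}; taking the maximum over $\tau'$ then yields $\degloc(A', k) \leq \degloc(A, k)$.

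The main obstacle is the injectivity of $\gamma$ on $E_{\tau'}(A')$. A collision forces both $s_2$ and $\fresh(s_2)$ to lie in $E_{\tau'}(A')$ for some $s_2 \in S_2$. I would rule this out by the following structural observation about Algorithm~\ref{alg:merge}: the only incoming transitions to any fresh state are on the token $u$. Indeed, line~\ref{alg:merge:add-uv} only installs $uv$-transitions whose targets are original states $s_3 \in Q$; lines~\ref{alg:merge:copy-to-s2p}--\ref{alg:merge:endcp} install only \emph{outgoing} transitions of fresh states; and the only operation that can redirect a transition so as to target a fresh state is line~\ref{alg:merge:add-u}, which does so exclusively on the token $u$. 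Consequently $\fresh(s_2) \in E_{\tau'}(A')$ forces the last token of $\tau'$ to be $u$. But line~\ref{alg:merge:add-u} simultaneously redirects \emph{every} $u$-transition of $A$ whose target was in $S_2$ to its fresh copy, so no $u$-transition in $A'$ ends at any $s_2 \in S_2$; this contradicts $s_2 \in E_{\tau'}(A')$. The delicate part is the line-by-line inventory of Algorithm~\ref{alg:merge} that excludes any other source of incoming transitions to fresh states; once that is in hand, the rest of the argument is essentially bookkeeping.
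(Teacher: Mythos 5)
Your proposal is correct and is essentially the paper's own argument: both lift runs of $A'$ back to runs of $A$ via the map $\gamma$ from the proof of Lemma~\ref{lem:pairwisecompalt}, show that $\gamma$ is injective on the end-state set $E_{\tau'}(A')$, and absorb the length mismatch $|\tau|\ge|\tau'|$ using Observation~\ref{obs:degloc-monotonic-decreasing} (the paper invokes the monotonicity for $A'$ while you invoke it for $A$; this difference is immaterial). The one substantive difference is in the injectivity step, where your version is in fact the more careful one: the paper asserts that when the last token of $\tau'$ is $u$, \emph{all} runs on $\tau'$ in $A'$ end in $Q'\setminus Q$, which is not literally true, since line~\ref{alg:merge:add-u} only redirects $u$-transitions whose target lies in $S_2$, so a run ending on $u$ may still end in an old state outside $S_2$. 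Your contradiction argument --- a $\gamma$-collision would force both $s_2$ and $\fresh(s_2)$ into $E_{\tau'}(A')$, yet fresh states are entered only on $u$, while line~\ref{alg:merge:add-u} leaves no $u$-transition into any state of $S_2$ --- relies only on the weaker statement that is actually true, and thus closes this step correctly.
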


\begin{proof}
We reuse the argument in the proof of Lemma~\ref{lem:pairwisecompalt}, relating the runs in $A$ to those in $A'$.

Suppose $\tau'$ is a tokenization read by a run in $A'$ of length $k'$. Let $\tau$ be the tokenization such that applying the rule $u\tok v$ to $\tau$ results in $\tau'$. If $\tau$ contains $u\tok v$, then $k'\leq k=|\tau|$, otherwise, $\tau=\tau'$. Note, that since $k'\leq k$, $\degloc(A',k)\le\degloc(A',k')$.

Let $\gamma$ be defined as in the proof of Lemma~\ref{lem:pairwisecompalt}, i.e., $\gamma$ recovers the state $q\in Q$ in $A$ from which the `fresh' state $q'\in Q'\setminus Q$ in $A'$ is created, and acts as the identity for all other states in $Q\subseteq Q'$.

The proof of Lemma~\ref{lem:pairwisecompalt} shows that if $p_0,\ldots,p_k$ is the sequence of states appearing in a run $p_0 \xto{\tau'} p_k$ in $A'$, then $\gamma(p_0),\ldots,\gamma(p_k)$ is a subsequence of states in a run $\gamma(p_0) \xto{\tau} \gamma(p_k)$ in $A$. Although $\gamma$ is not generally injective, it becomes injective when restricted to the domain to $Q$ or $Q'\setminus Q$, respectively.
Observe that if the last token in $\tau'$ is $u$, all runs reading $\tau'$ in $A'$ end in states in $Q'\setminus Q$. Otherwise, they end in states in $Q$. Thus, for any choice of $\tau'$, $\gamma$ is an injective mapping from $E_{\tau'}(A')$ to $E_{\tau}(A)$, ensuring $|E_{\tau'}(A')|\le |E_{\tau}(A)|$ and thus $\degloc(A',k')\leq \degloc(A,k)$.

 Combining that $\degloc(A',k)\leq\degloc(A',k')$ and $\degloc(A',k')\leq \degloc(A,k)$, we establish that $\degloc(A',k)\leq \degloc(A,k)$.
\end{proof}

In particular, this means that the algorithm preserves $k$-locality.
\begin{corollary}
    Algorithm~\ref{alg:merge} applied the $k$-local context-invariant token DFA $A$ results in a token DFA which is $k$-local.
\end{corollary}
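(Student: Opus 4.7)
The plan is to observe that this corollary follows almost immediately from Lemma~\ref{lem:degkloc-nonincreasing}. By comparing the definition of a $k$-local token DFA with Definition~\ref{defn:deg-k-loc}, we see that being $k$-local is exactly the condition $\degloc(A,k)\le 1$. So once the lemma is in hand, the corollary reduces to a one-line inequality chase.

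Concretely, I would proceed as follows. First, note that the hypothesis ``$A$ is $k$-local'' translates to $\degloc(A,k)\le 1$. Second, since $A$ is context-invariant, Lemma~\ref{lem:degkloc-nonincreasing} applies to the output $A'$ of Algorithm~\ref{alg:merge}, giving $\degloc(A',k)\le\degloc(A,k)$. Chaining these two facts yields $\degloc(A',k)\le 1$, which is precisely the statement that $A'$ is $k$-local.

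There is essentially no obstacle here; all the real work — constructing the map $\gamma$, verifying its injectivity on $E_{\tau'}(A')$, and handling the case where $\tau'$ ends in the token $u$ — has already been carried out in the proof of Lemma~\ref{lem:degkloc-nonincreasing}. The only thing one needs to be slightly careful about is confirming that the ``$\le 1$'' bound from the $k$-local definition lines up with ``$\degloc\le 1$'' rather than ``$\degloc = 1$'' (one could in principle have $\degloc(A',k)=0$ if no length-$k$ tokenization is readable in $A'$, but this trivially still satisfies the $k$-local condition).
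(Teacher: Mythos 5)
Your proposal is correct and matches the paper's proof exactly: both invoke Lemma~\ref{lem:degkloc-nonincreasing} to get $\degloc(A',k)\le\degloc(A,k)\le 1$ and read off $k$-locality of $A'$ from Definition~\ref{defn:deg-k-loc}. The extra remark about $\degloc(A',k)=0$ still satisfying the definition is a fine (if unnecessary) sanity check.
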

\begin{proof}
    Since $A$ is $k$-local we have $\degloc(A,k)\le 1$, and thus $\degloc(A',k)\le 1$, i.e.\ $A'$ is $k$-local.
\end{proof}

We next relate the number of states added by a merge as in Algorithm~\ref{alg:merge}, to the number of states in the original DFA $A$.

\begin{lemma}
    \label{lem:state-complexity-old}%
    Assume Algorithm~\ref{alg:merge} is applied to the context-invariant token DFA $A$ and rule $u \tok v$ to produce $A'$. Then:
    \begin{enumerate}
        \item $|Q'|= |Q|+|Q'\setminus Q|=|Q|+|E_{u,v}(A)|\le |Q| + |E_{u}(A)|\le |Q|+\degloc(A,1)$;
        \item $|E_u(A)|=|E_{u,v}(A)|+|E_u^v(A)|=|Q'\setminus Q|+|E_u^v(A)|=|E_u(A')|;$
        \item $|E_{uv}(A')|\le |E_v(A)|$; and,
        \item $|E_{\beta}(A')|=|E_{\beta}(A)|$, for all $\beta\in \Gamma$.
    \end{enumerate}
\end{lemma}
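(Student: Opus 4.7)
The plan is to handle the four claims by carefully tracking which states lose or gain incoming transitions on each token when Algorithm~\ref{alg:merge} is applied. The central observation is that $S_2 = E_{u,v}(A)$: a state appears as the middle component of some $(s_1,s_2,s_3)\in S$ precisely when it has an incoming $u$-transition and an outgoing $v$-transition. Since $Q'\setminus Q = \{\fresh(s_2) : s_2 \in S_2\}$, we immediately obtain $|Q'\setminus Q| = |E_{u,v}(A)|$, and combining this with the inclusion $E_{u,v}(A) \subseteq E_u(A)$ together with $|E_u(A)| \le \degloc(A,1)$ (from Definition~\ref{defn:deg-k-loc}) establishes claim~(1).

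For claim~(2), the partition $E_u(A) = E_{u,v}(A) \cup E_u^v(A)$ into disjoint pieces is immediate from the definitions, giving the first two equalities. For the last equality $|E_u(A)| = |E_u(A')|$, I would exhibit the bijection $f \colon E_u(A) \to E_u(A')$ defined by $f(q) = q$ for $q \in E_u^v(A)$ and $f(s_2) = \fresh(s_2)$ for $s_2 \in E_{u,v}(A)$. Line~\ref{alg:merge:add-u} redirects every original $u$-transition whose target lies in $S_2$ to the corresponding fresh state, while the $u$-transitions with target in $E_u^v(A)$ are untouched, so every element of the image of $f$ indeed lies in $E_u(A')$. Surjectivity requires checking that the new $u$-transitions out of fresh states introduced by line~\ref{alg:merge:copy-to-s2p} (applicable only when $u \ne v$) do not produce any target outside $E_u^v(A) \cup \fresh(E_{u,v}(A))$: each such target is of the form $\delta(s_2,u) \in E_u(A)$, which, once line~\ref{alg:merge:add-u} is also applied to the fresh states, is either left alone (if already in $E_u^v(A)$) or redirected to its fresh copy (if in $E_{u,v}(A)$). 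Injectivity is immediate since $Q$ and $Q'\setminus Q$ are disjoint.

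Claim~(3) is a direct corollary of the construction: the only $uv$-transitions in $A'$ are those added in line~\ref{alg:merge:add-uv}, each of the form $\delta'(s_1,uv) = s_3$ with $(s_1,s_2,s_3) \in S$, and $s_3 = \delta(s_2,v) \in E_v(A)$, so $E_{uv}(A') \subseteq E_v(A)$. For claim~(4), fix $\beta \in \Gamma$: the case $\beta = u$ is claim~(2); for $\beta = v$, lines~\ref{alg:merge:copy-to-s2p}--\ref{alg:merge:remove-uv} explicitly skip $v$, so no new $v$-transitions are introduced, and no existing $v$-transition is redirected, giving $E_v(A') = E_v(A)$; for $\beta \notin \{u,v\}$ the original $\beta$-transitions survive untouched, while the transitions copied onto fresh states share their targets with the pre-existing $\beta$-transitions out of the corresponding $s_2 \in S_2$, so no new target appears and $E_\beta(A') = E_\beta(A)$.

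The main subtlety I anticipate lies in claim~(2): after line~\ref{alg:merge:add-fresh} the set $Q$ has grown, so the iteration in line~\ref{alg:merge:add-u} may also redirect the freshly added $u$-transitions emanating from the fresh states themselves. I would verify explicitly that this secondary application is precisely what confines all $u$-targets in $A'$ to $E_u^v(A) \cup \fresh(E_{u,v}(A))$, guaranteeing the surjectivity of $f$ and hence the crucial equality $|E_u(A)| = |E_u(A')|$.
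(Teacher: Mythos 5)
Your proof is correct and follows essentially the same route as the paper's: identifying $S_2 = E_{u,v}(A)$, using the disjoint decomposition $E_u(A)=E_{u,v}(A)\cup E_u^v(A)$, and tracking transition targets label by label; your explicit bijection for claim~(2), including the observation that line~\ref{alg:merge:add-u} also redirects the $u$-transitions copied onto fresh states, merely spells out what the paper states as $E_u(A') = (Q'\setminus Q)\cup E_u^v(A)$. One minor imprecision: when $u\ne v$, line~\ref{alg:merge:copy-to-s2p} can also copy $uv$-transitions onto fresh states (since $uv$ is added to $\Gamma$ before that loop, the paper cites both lines for claim~(3)), but those copies share their targets with the transitions added in line~\ref{alg:merge:add-uv}, so your conclusion $E_{uv}(A')\subseteq E_v(A)$ is unaffected.
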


\begin{proof}
    Observe that the set $S_2$ in Algorithm~\ref{alg:merge} is such that $|S_2|=|E_{u,v}(A)|\leq|E_u(A)|$, since $S_2$ consists exactly of those states with an incoming transition on $u$ and an outgoing transition on $v$. Since $|Q'|$ is the sum of $|Q|$ and $|S_2|$ (as a new state is created for every state in $S_2$), the equalities and inequalities in (1) must hold.

    For (2), by definition, $E_u(A)=E_{u,v}(A)\cup E_u^v(A)$ with $E_{u,v}(A)\cap E_u^v(A)=\varnothing$, which directly implies $|E_u(A)|=|E_{u,v}(A)|+|E_u^v(A)|$. Again, using the fact that $E_{u,v}(A)=S_2$ and that $|S_2|=|Q'\setminus Q|$, we obtain $|Q'\setminus Q|=|E_{u,v}(A)|$. Since Algorithm~\ref{alg:merge} ensures that $E_{u,v}(A')=\varnothing$, it follows that $E_u(A')=E_u^v(A')=(Q'\setminus Q)\cup E_u^v(A)$, and leading to the equality $|(Q'\setminus Q)|+|E_u^v(A)|=|E_u(A')|$.

    In the case of (3), $|E_v(A)|$ serves as an upper bound on the number of states that gain an incoming transition on $uv$ in $A'$. Algorithm~\ref{alg:merge} only adds transitions labelled $uv$ when $A$ transitions to a state $s_3$ (lines~\ref{alg:merge:add-uv} and~\ref{alg:merge:copy-to-s2p}), which by definition, must have an incoming transition on $v$.

    The equality in (4) is a matter of inspecting each step of Algorithm~\ref{alg:merge}. For $\beta\notin \{u,v\}$, only step~\ref{alg:merge:copy-to-s2p} of Algorithm~\ref{alg:merge} adds such transitions, where if $\delta(s_2,\beta)=s'$ we add a transition $\delta'(\fresh(s_2),\beta)=s'$. However, since $s'$ is by construction already a target for a transition labelled $\beta$, this does not increase $|E_{\beta}(A')|$. 
    For $\beta=u$, the result follows from (2).
    Finally, trivially $|E_v(A')|=|E_v(A)|$ as Algorithm~\ref{alg:merge} never adds or removes a transition on $v$.
\end{proof}

As a result the algorithm preserves the degree of 1-locality.
\begin{corollary}\label{cor:preservation_locality}
If the token DFA $A'$ is obtained by applying Algorithm~\ref{alg:merge} to the context-invariant token DFA $A$ we have $\degloc(A',1)=\degloc(A,1)$.
\end{corollary}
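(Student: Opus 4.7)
The plan is to show that the two inequalities $\degloc(A',1)\le\degloc(A,1)$ and $\degloc(A',1)\ge\degloc(A,1)$ both hold, establishing the equality. The first inequality is immediate: it is precisely the statement of Lemma~\ref{lem:degkloc-nonincreasing} specialised to $k=1$. So the only remaining task is to demonstrate the reverse inequality.

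For the reverse inequality, I would unfold the definition of $\degloc(A,1)$ as $\max_{\beta\in\Gamma}|E_{\beta}(A)|$, and similarly $\degloc(A',1)=\max_{\beta'\in\Gamma\cup\{uv\}}|E_{\beta'}(A')|$. Part (4) of Lemma~\ref{lem:state-complexity-old} gives exactly the crucial equality $|E_{\beta}(A')|=|E_{\beta}(A)|$ for every $\beta\in\Gamma$, which tokens in the original alphabet retain the same number of ``landing states'' after the merge. Since the alphabet of $A'$ is a superset of $\Gamma$, taking the maximum over a larger set can only increase (or preserve) the value, so
\[
\degloc(A',1)=\max_{\beta'\in\Gamma\cup\{uv\}}|E_{\beta'}(A')|\ge \max_{\beta\in\Gamma}|E_{\beta}(A')|=\max_{\beta\in\Gamma}|E_{\beta}(A)|=\degloc(A,1),
\]
completing the argument.

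There is no real obstacle here; the corollary is essentially a book-keeping consequence of the two preceding results, with Lemma~\ref{lem:state-complexity-old}(4) providing the pointwise preservation of $|E_{\beta}|$ for $\beta\in\Gamma$ and Lemma~\ref{lem:degkloc-nonincreasing} providing the monotonicity in the other direction. The only mild subtlety worth mentioning explicitly in the write-up is that the new token $uv$ (which could in principle carry transitions into many states) does not spoil the bound, because we only need a \emph{lower} bound on $\degloc(A',1)$ from the old alphabet, while the upper bound coming from Lemma~\ref{lem:degkloc-nonincreasing} already handles any contribution from $uv$.
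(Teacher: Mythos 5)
Your proposal is correct and takes essentially the same approach as the paper: the upper bound comes from Lemma~\ref{lem:degkloc-nonincreasing} at $k=1$, and the reverse inequality from the pointwise preservation $|E_{\beta}(A')|=|E_{\beta}(A)|$ in Lemma~\ref{lem:state-complexity-old}(4). The only cosmetic difference is that the paper also cites part (3) of that lemma (bounding $|E_{uv}(A')|$), which, as you correctly observe, is not needed once the upper bound from Lemma~\ref{lem:degkloc-nonincreasing} is in hand.
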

\begin{proof}
 When we set $k=1$ in the inequality $\degloc(A',k)\le\degloc(A,k)$, we obtain $\degloc(A',1)\le\degloc(A,1)$. But it follows from parts (3) and (4) of the previous lemma that this inequality is in fact an equality.
\end{proof}

\begin{theorem}
    \label{thm:tok-complexity}%
    Let $A$ be a context-invariant token DFA over $\Gamma$ with $n$ states, and let $D$ be a proper dictionary. Then, iteratively merging all rules from $D$ into $A$ (as in Theorem~\ref{thm:all-of-it}) produces a DFA with at most $n+|D|\degloc(A,1)$ states. There exist $A$ and $D$ (with $D$ of any specified finite cardinality) achieving this bound.
\end{theorem}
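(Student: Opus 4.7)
The theorem splits into an upper-bound claim and an existence (tightness) claim, and I would treat them separately.

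For the upper bound, the plan is straightforward induction on $i=0,\ldots,|D|$. Writing $A_0=A$ and $A_i$ for the token DFA after merging rule $u_i\tok v_i$ into $A_{i-1}$, Lemma~\ref{lem:prefixpreserved} guarantees that context-invariance is preserved at every step, so Corollary~\ref{cor:preservation_locality} applies repeatedly and gives $\degloc(A_i,1)=\degloc(A,1)$ for every $i$. Lemma~\ref{lem:state-complexity-old}(1) then yields $|Q_i|\leq|Q_{i-1}|+\degloc(A_{i-1},1)=|Q_{i-1}|+\degloc(A,1)$, and unrolling this recurrence gives $|Q_{|D|}|\leq n+|D|\degloc(A,1)$.

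For the tightness claim, the plan is to exhibit a family of examples parameterised by $k=|D|$. I would take $A$ to be the one-state universal token DFA over $\Sigma=\{a_1,\ldots,a_{k+1}\}$, so that $n=\degloc(A,1)=1$ and the target bound becomes $1+k$, and set
\[D=[a_1\tok a_2,\ a_1a_2\tok a_3,\ a_1a_2a_3\tok a_4,\ \ldots,\ a_1\cdots a_k\tok a_{k+1}].\]
This dictionary is proper, because each left-hand side of length greater than one equals $u_{i-1}v_{i-1}$ of the immediately preceding rule. I would then maintain the loop invariant that, on entry to step $i$, the original state $q_0$ satisfies both $\delta(q_0,u_i)=q_0$ and $\delta(q_0,v_i)=q_0$. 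These self-loops survive each call of Algorithm~\ref{alg:merge} because line~\ref{alg:merge:add-u} only rewrites transitions labelled $u_j$ (which coincides with a base symbol only when $j=1$, so any $v_i=a_{i+1}$ with $i\geq 1$ is unaffected), and line~\ref{alg:merge:add-uv} at step $i-1$ explicitly introduces $\delta(q_0,u_i)=q_0$ for the triple $(q_0,q_0,q_0)$ and is never overwritten during the remainder of that step. The invariant forces $(q_0,q_0,q_0)\in S$ at step $i$, so $|S_2|\ge 1$, while $|S_2|=|Q'\setminus Q|\le \degloc(A_{i-1},1)=1$ by Lemma~\ref{lem:state-complexity-old}(1) together with Corollary~\ref{cor:preservation_locality}. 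Thus each merge adds exactly one state, yielding $|Q_k|=1+k$ as required.

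The main obstacle is the bookkeeping around the loop invariant: one must verify that neither line~\ref{alg:merge:copy-to-s2p} nor line~\ref{alg:merge:remove-uv} disturbs the self-loops at $q_0$ (they only emit transitions out of fresh states, not out of $q_0$), and that a self-loop freshly introduced at line~\ref{alg:merge:add-uv} is not subsequently overwritten by line~\ref{alg:merge:add-u} during the same call (since the latter targets only transitions labelled $u_j$, not $u_jv_j$). Once this verification is in place, the induction closes and the tightness half of the theorem follows.
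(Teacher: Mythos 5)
Your proof is correct. The upper-bound half is essentially identical to the paper's: both iterate Lemma~\ref{lem:state-complexity-old}(1) together with Corollary~\ref{cor:preservation_locality} (the paper is slightly less explicit than you about invoking Lemma~\ref{lem:prefixpreserved} to keep context-invariance available at every step, but the argument is the same). The tightness half, however, is a genuinely different construction. The paper stays over a unary alphabet $\{a\}$ and uses the doubling dictionary $[a\tok a,\ aa\tok aa,\ \ldots,\ a^{2^{k-1}}\tok a^{2^{k-1}}]$, in which every rule has $u=v$; its induction tracks that the only transition on $a^{2^i}$ is a self-loop at $q_0$, so $S$ itself is a singleton and exactly one fresh state appears per merge. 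You instead grow the alphabet with $|D|$, use the prefix-chain dictionary $[a_1\tok a_2,\ a_1a_2\tok a_3,\ \ldots]$ in which every rule has $u\neq v$, and conclude ``exactly one new state per merge'' by combining a lower bound ($(q_0,q_0,q_0)\in S$ via your self-loop invariant, so $|S_2|\ge 1$) with an upper bound obtained by reusing Corollary~\ref{cor:preservation_locality} to force $|S_2|\le\degloc(A_{i-1},1)=1$. That last step is a nice economy: it spares you from tracking \emph{all} transitions labelled $u_i$ (and indeed in your example $S$ is not a singleton --- e.g.\ at step two both $q_0$ and the fresh state $q_1$ carry transitions on $a_1a_2$ into $q_0$ --- only $S_2$ is), whereas the paper's unary example keeps $|S|=1$ outright. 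The trade-offs: the paper's witness uses a fixed two-state-free unary alphabet but exponentially long tokens, and exercises the $u=v$ branch (line~\ref{alg:merge:remove-uv}) of Algorithm~\ref{alg:merge}; yours needs $k+1$ base symbols but only linearly long tokens, and exercises the $u\neq v$ branch (line~\ref{alg:merge:copy-to-s2p}). Both are valid witnesses for ``$D$ of any specified finite cardinality,'' since the theorem permits choosing $A$ and $D$ together.
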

\begin{proof}
    We use the inequality $|Q'|\le |Q|+\degloc(A,1)$, from Lemma~\ref{lem:state-complexity-old} (1), to obtain the result. 
    
    If $|D|=0$, then $A=A'$, and the result trivially holds. Otherwise, merging the highest priority rule of $D$ into $A$, produces $A'$. By case 1 of Lemma~\ref{lem:state-complexity-old}, $A'$ has at most $k=\degloc(A,1)$ more states than $A$. Further, by Corollary~\ref{cor:preservation_locality}, $\degloc(A,1)=\degloc(A',1)$.
    Repeating this process with $A'$ and the next highest priority rule in $D$, we iterate for $|D|$ steps, obtaining the stated bound.

    Let $A$ be the single-state DFA for $a^*$, as shown on the left of Figure~\ref{fig:bound-reach}. For each $k=1,2,\ldots,$ construct the $k$-rule dictionary $D_k$ as follows:
    \[
    D_k = [a \tok a, aa \tok aa, aaaa \tok aaaa, \ldots, a^{2^{(k-1)}} \tok a^{2^{(k-1)}}].
    \]
    We argue that for each $k$, applying Theorem~\ref{thm:all-of-it} to $A$ with $D_k$ yields an automaton with exactly $k+1$ states. This precisely matches the stated bound since $A$ has $n=1$ and $\degloc(A,1)=1$. 
    
    The resulting automata for $D_1$, $D_2$ and $D_3$ are shown in Figure~\ref{fig:bound-reach}.
    \begin{figure}
        \centering
        \begin{tikzpicture}[autbase,node distance=1.25cm]
            \node[state, initial, accepting] (q0) {$q_0$};
            \draw (q0) edge[loop below] node {$a$} (q0);

            \node[state,initial,accepting] (q0) at (1.75,0) {$q_0$};
            \draw (q0) edge[loop below] node {$aa$} (q0);            
            \node[state,accepting,right of=q0] (q1) {$q_1$};
            \draw (q0) edge[above] node {$a$} (q1);

            \node[state,initial,accepting] (q0) at (5,0) {$q_0$};
            \draw (q0) edge[loop below] node {$aaaa$} (q0);
            \node[state,accepting,right of=q0] (q1) {$q_1$};
            \draw (q0) edge[below] node {$aa$} (q1);
            \node[state,accepting,above of=q1] (q2) {$q_2$};
            \draw (q0) edge[above left] node[xshift=3pt] {$a$} (q2);
            \draw (q1) edge[right] node {$a$} (q2);

            \node[state,initial,accepting] (q0) at (8.5,0) {$q_0$};
            \draw (q0) edge[loop below] node {$aaaaaaaa$} (q0);
            \node[state,accepting,right of=q0,xshift=10pt] (q1) {$q_1$};
            \draw (q0) edge[below] node {$aaaa$} (q1);
            \node[state,accepting,above of=q0] (q2) {$q_2$};
            \draw (q0) edge[left] node {$aa$} (q2);
            \draw (q1) edge[below left] node[pos=0.7,yshift=2pt,xshift=2pt] {$aa$} (q2);
            \node[state,accepting,above of=q1] (q3) {$q_3$};
            \draw (q0) edge[below right] node[xshift=-3pt,pos=0.75,yshift=2pt] {$a$} (q3);
            \draw (q1) edge[right] node {$a$} (q3);
            \draw (q2) edge[above] node {$a$} (q3);
        \end{tikzpicture}
        \caption{The first four of the automata which are used to demonstrate that the bound of Theorem~\ref{thm:tok-complexity} can be reached. From left to right the automaton accepting $a^*$, then the three automata resulting from successively applying the merges $a\tok a$, then $aa\tok aa$, and finally $aaaa \tok aaaa$.}
        \label{fig:bound-reach}
    \end{figure}
    To formalize, let $A_i$ be the automaton constructed for $\mathbb{T}^{D_i}(A)$. As the base case, observe that $A_1$ (second from the left in Figure~\ref{fig:bound-reach}) has two states, with a single transition on $aa$ forming a loop on the initial state. Now, assume that $A_i$ has $i+1$ states and that the only transition on $a^{2^i}$ is a loop on its initial state, denoted $q_0$. By Theorem~\ref{thm:all-of-it}, $A_{i+1}$ is obtained from $A_i$ by applying Algorithm~\ref{alg:merge} using the rule $a^{2^i}\tok a^{2^i}$.
    
    Since the loop on $q_0$ is the only occurrence of $a^{2^i}$ by assumption, $S=\{(q_0,q_0,q_0)\}$ in Algorithm~\ref{alg:merge}. Therefore, the algorithm introduces a single new state $\fresh(q_0)$, which is necessarily useful. This results in a total of $i+2$ states in $A_{i+1}$. Moreover, the only transition on $a^{2^{i+1}}$ is again a loop on $q_0$ in $A_{i+1}$, thereby verifying the inductive hypothesis.
\end{proof}

\begin{theorem}
    Algorithm~\ref{alg:merge} can be applied to a suitably encoded context-invariant token DFA $A=(Q,\Gamma,q_0,\delta,F)$ in time $\mathcal{O}(|Q||\Gamma|)$. Applying Algorithm~\ref{alg:merge} iteratively as in Theorem~\ref{thm:all-of-it} with a proper dictionary $D$ can be done in time $\mathcal{O}(|Q||\Gamma'||D|^2)$, where $\Gamma'=\Sigma\cup\{ uv\mid u\tok v\in D\}$.
\end{theorem}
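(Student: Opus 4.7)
The plan is to bound the cost of Algorithm~\ref{alg:merge} line by line under a reasonable encoding of $\delta$, and then combine this per-merge cost with the state-complexity result of Theorem~\ref{thm:tok-complexity} to obtain the iterated bound. Throughout, ``suitably encoded'' will mean that $\delta$ is represented so that each lookup, insertion, or deletion of a single $(q,\alpha)$ entry takes $O(1)$ time (for example, a two-dimensional array indexed by $(q,\alpha)$, or a hash table), and that we also maintain alongside it the characteristic set of $S_2$ for $O(1)$ membership tests.

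First, I would analyse one call to Algorithm~\ref{alg:merge}. Collecting $S$ in line~2 requires, for each $s_1\in Q$, at most two transition lookups to evaluate $\delta(s_1,u)$ and then $\delta(\delta(s_1,u),v)$, costing $O(|Q|)$ and giving $|S|,|S_2|\le|Q|$. Adding the $uv$-transitions, creating the fresh states, and redirecting the incoming $u$-transitions each touch at most $|Q|$ items with $O(1)$ work per item, so each of these steps contributes $O(|Q|)$. The only step whose cost scales with the alphabet is the copy loop: for every $s_2\in S_2$ we iterate through $\Gamma$ and copy up to $|\Gamma|$ outgoing transitions to $\fresh(s_2)$, totalling $O(|S_2|\cdot|\Gamma|)=O(|Q||\Gamma|)$. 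This dominates all other steps and establishes the first bound.

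For the iterated bound, I would track how the state count and alphabet grow across the $|D|$ merges. Using the trivial bound $\degloc(A_0,1)\le|Q|$ together with Theorem~\ref{thm:tok-complexity} applied to each prefix of $D$, the automaton $A_i$ obtained after $i$ merges has at most $|Q|+i\cdot|Q|=O(|Q|\cdot i)$ states, while its alphabet is a subset of $\Gamma'=\Sigma\cup\{uv\mid u\tok v\in D\}$ of size at most $|\Gamma'|$. Applying the single-merge bound at step $i+1$ therefore costs $O(|Q|\cdot i\cdot|\Gamma'|)$, and summing the arithmetic series
\[
\sum_{i=0}^{|D|-1} O\!\left(|Q|\cdot i\cdot|\Gamma'|\right) = O\!\left(|Q|\,|\Gamma'|\,|D|^2\right)
\]
yields the desired bound.

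The main obstacle, and the reason the ``suitably encoded'' wording matters, is ensuring that both the construction of $S$ (line~2) and the redirection of incoming $u$-transitions (line~18) are carried out within $O(|Q|)$ rather than $O(|Q|\cdot|\Gamma|)$: for the former one iterates over $Q$ using only $O(1)$ forward lookups, and for the latter one iterates over $Q$ testing membership in $S_2$ in $O(1)$ via the characteristic set maintained alongside it. These choices also survive the iteration, because each transition added, copied, or redirected by a single merge is touched only $O(1)$ times during that merge, so the per-iteration bound above remains valid throughout the whole sequence of $|D|$ applications.
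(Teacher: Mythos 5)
Your proof is correct and takes essentially the same approach as the paper: a per-merge cost of $\mathcal{O}(|Q||\Gamma|)$ dominated by the loop copying outgoing transitions to the fresh states (with $S$ built in $\mathcal{O}(|Q|)$ via two lookups per state), combined with the state-growth bound of Theorem~\ref{thm:tok-complexity} to obtain the iterated bound. The only cosmetic difference is that you coarsen $\degloc(A_0,1)\le|Q|$ upfront and sum an arithmetic series over the merges, whereas the paper keeps the locality parameter $l=\degloc(A,1)$ throughout, obtains the sharper intermediate bound $\mathcal{O}(|Q||\Gamma'||D|+l|\Gamma'||D|^2)$, and only applies $l\le|Q|$ at the very end; both arguments yield the claimed $\mathcal{O}(|Q||\Gamma'||D|^2)$.
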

\begin{proof}
    We assume that looking up $\delta(q,\alpha)$  and adding or removing transitions can be done in time $\mathcal{O}(1)$, and that a state can be added in time $\mathcal{O}(|\Gamma'|)$. The time complexity of Algorithm~\ref{alg:merge} is dominated by the loop over $S_2$ from lines 11 to 17, running $\mathcal{O}(|Q|)$ times, with each iteration running in time $\mathcal{O}(|\Gamma'|)$.  
    Furthermore, we can construct $S$ in time $\mathcal{O}(|Q|)$ by for each $q\in Q$ looking up $\delta(q,u)$ and $\delta(\delta(q,u),v)$. If both are defined, these states form a triple in $S$.
 Iterating the procedure, Theorem~\ref{thm:tok-complexity} dictates that no automaton in the process will have more than $|Q|+l|D|$ states, where $l=\degloc(A,1)$, with the cardinality of the token alphabet increasing from $|\Gamma|$ to $|\Gamma'|$. Putting it together, we apply Algorithm~\ref{alg:merge} $|D|$ times, each with a running time in $\mathcal{O}((|Q|+l|D|)|\Gamma'|)$, giving a time complexity of $\mathcal{O}(|Q||\Gamma'||D|+l|\Gamma'||D|^2)$, which is $\mathcal{O}(|Q||\Gamma'||D|^2)$, since $l\le |Q|$.
\end{proof}

Beyond the state complexity, the proof that the algorithm does not increase the degree of $k$-locality of automata also has other useful implications.
\begin{observation}
    $\mathbb{T}^D(\Sigma^*)$ is a local language for all $D$, since $\Sigma^*$ is a local language ($\degloc(A,1)=1$ for the single-state universal DFA $A$) and Corollary~\ref{cor:preservation_locality} dictates that this degree is preserved. As such, correct tokenization can be recognized by inspecting only adjacent pairs of tokens (plus the knowledge of which token is first). This is potentially useful in streaming applications, since a network device can check the correctness of a long tokenization using only a sliding window of width 2.

    Generally, for an automaton with $k$-locality 1, it is possible to verify a stream of tokens by checking that each $(k+1)$-length substring has \emph{some} run in the automaton.
\end{observation}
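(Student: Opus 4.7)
The observation asserts two related facts: that $\mathbb{T}^D(\Sigma^*)$ is a local language (briefly justified in the text), and that, more generally, membership in the language of any token DFA $A$ with $\degloc(A,k)=1$ can be checked by verifying that each $(k+1)$-length contiguous subtokenization has some run in $A$. My plan is to treat both, but to concentrate effort on the second since the first is essentially a repackaging of already-proved results.

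For the first claim, the one-state universal DFA $U$ for $\Sigma^*$ trivially satisfies $\degloc(U,1)=1$; by Theorem~\ref{thm:all-of-it} and Corollary~\ref{cor:preservation_locality}, iteratively merging the rules of $D$ into $U$ yields a 1-local DFA for $\mathbb{T}^D(\Sigma^*)$, so the language is local by definition. Moreover, because Algorithm~\ref{alg:merge} copies the accepting status of $s_2$ onto $\fresh(s_2)$, starting from the all-accepting $U$ leaves every state of the final DFA accepting, so membership reduces to the mere existence of a run from $q_0$.

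For the general verification claim I would phrase the precise statement as follows: if $\degloc(A,k)=1$ and $\tau = w_1\tok\cdots\tok w_n$ with $n \geq k+1$, then $A$ has a run labelled $\tau$ if and only if every $(k+1)$-length contiguous subtokenization $w_i \tok \cdots \tok w_{i+k}$ is labelled by some run in $A$. The forward direction is immediate. For the converse, I would fix for each $i$ a witnessing run $r_i$ for the $i$-th window and glue these together inductively. The key step is that the suffix of $r_i$ reading the $k$-length tokenization $w_{i+1}\tok\cdots\tok w_{i+k}$ and the prefix of $r_{i+1}$ reading the same $k$-length tokenization must, by $\degloc(A,k)=1$, terminate in the same state regardless of where they began; hence the single new transition contributed by $r_{i+1}$ (on $w_{i+k+1}$) can be safely appended to $r_i$. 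Iterating this produces one run labelled by all of $\tau$.

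The main obstacle is the boundary bookkeeping: the stitched run exists between \emph{some} pair of states, but to verify membership we also need the run to begin at $q_0$ (and in a general setting, to end in $F$). This is what the observation's parenthetical `plus the knowledge of which token is first' refers to -- a separate initial check against the outgoing transitions of $q_0$ is required. In the specific $\mathbb{T}^D(\Sigma^*)$ application no final-state check is necessary, but a fully general statement would need one, so some care is needed in how precisely the observation is to be interpreted before committing to a proof.
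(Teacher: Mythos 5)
Your proposal is correct and follows essentially the same route as the paper: the paper justifies the first claim exactly as you do (the single-state universal DFA has $\degloc(A,1)=1$, preserved under the merges by Corollary~\ref{cor:preservation_locality}, so the constructed DFA for $\mathbb{T}^D(\Sigma^*)$ is 1-local), and states the second claim without proof, for which your window-stitching argument via $\degloc(A,k)=1$ is the natural intended one. Your treatment of the boundary conditions---anchoring the stitched run at $q_0$ via the first token, and noting that all states remain accepting when starting from the all-accepting universal DFA (cf.\ Remark~\ref{rem:useless})---correctly fills in details the paper leaves implicit.
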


\section{Building a Transducer}
\label{sec:transducer}%
The procedure in Section \ref{sec:constructionprocedure} yields a token DFA $A$ for any given proper dictionary $D$ and regular language $L$ such that $\mathcal{L}(A)=\mathbb{T}^D(L)$. However, in most cases, the goal is not merely to recognize valid tokenizations but to perform the tokenization itself. To this end, the token DFA $A$ constructed in Section~\ref{sec:constructionprocedure} can be used to generate a transducer $T$ that explicitly maps input strings to their tokenized form, representing the relevant subset of the relation $\mathbb{T}^D$.

First, we construct a nondeterministic transducer $T$, which, when used directly for tokenization, turns out to be computationally expensive. Fortunately, $T$ can be transformed into a subsequential (i.e.\ input-deterministic) transducer that runs in linear time. 

We begin by recalling the definition of a (restricted form of a) transducer.
\begin{definition}
    A \emph{transducer} over an input alphabet $\Sigma$ and output alphabet $\Gamma$,  is a tuple $T=(Q,A,q_0,\delta,F)$, where each component in the tuple (mostly) serves the same purpose as for DFA in Definition~\ref{defn:dfa}, only now:
    \begin{itemize}
    \item the alphabet consists of pairs $A=\Sigma \times (\Gamma \cup \{\eps\})$ over $\Sigma$ (the \emph{input alphabet}) and $\Gamma$ (the \emph{output alphabet}); we denote $(u,v) \in A$ as $u|v$ in rules and runs;
    \item $\delta \subseteq Q \times A \times Q$ is a not necessarily functional relation; and,
    \item $F : Q \to \Gamma^*$ is a possibly partial function; we say that the states in the domain of $F$ are the accepting states.
    \end{itemize}
    We use the same notation for transitions and runs as in Definition~\ref{defn:dfa}. Specifically, for $(q,(a,b),q') \in \delta$, we write $q \xto{a|b} q'$, meaning that from $q$, reading $a$ produces output $b$. 
    
    A run $q_1 \xto{u_1|v_1} \cdots \xto{u_n|v_n} q_{n+1}$ in $T$, reads the input $u_1 \cdots u_n$ and outputs $v_1 \cdots v_n$. (As usual, $\eps$ is the identity, so $u\cdot \eps=\eps \cdot u$ for all $u$.) When the intermediate states are not of interest, the run may be written more compactly as $q_1 \xto{u_1\cdots u_n|v_1\cdots v_n} q_{n+1}$. 
    
    An accepting run $q_0 \xto{u|v} q_n$ must start in $q_0$ and end in a state where $F$ is defined; such a run reads $u\in \Sigma^*$ and outputs $v\cdot F(q_n) \in \Gamma^*$.

    The \emph{relation recognized by $T$} is the set of all $(u,v)\in \Sigma^* \times \Gamma^*$ such that there is an accepting run in $T$ that reads input $u$ and outputs $v$. We denote this set by $\mathcal{L}(T)$.

    A transducer is \emph{subsequential} (or equivalently, input-deterministic) if  for every $(p,u)\in Q\times\Sigma$, we have $|\{(v,q) \mid (p,(u,v),q) \in \delta\}|\le 1$.  The transducer is called \emph{sequential} if it additionally satisfies $F(q)=\eps$ for all $q$ on which $F$ is defined.
\end{definition}
Observe that this definition of a transducer is somewhat restricted (notably, it is necessarily real-time, see~\cite{beal2002} for more details), tailored to the needs of our disposition.

\begin{observation}
\label{obs:build-trans}%
For a token DFA $A=(Q,\Gamma,q_0,\delta,F)$, with base alphabet $\Sigma$, we construct a transducer $T=(Q',\Sigma\times (\Gamma\cup\{\eps\}),q_0,\delta',F')$ that recognizes $\{(\pi(\varphi),\varphi) \mid \varphi \in \lang(A)\}$ as follows. 

Initially, let $Q'=Q$. For each transition $p\xto{u} q$ in $A$, where $u=\alpha_1\cdots \alpha_k\in\Gamma$ and $\alpha_i\in\Sigma$, replace it with a $k$-step run in $T$ that produces the token $u$ upon reading the string $u$ from state $p$. Specifically, add $k-1$ intermediate states $s_1, \ldots, s_{k-1}$ to $Q'$, with $s_0$ denoting $p$. Next, define $\delta'$ to be the union of $\{(s_i,(\alpha_{i+1},\eps),s_{i+1}) \mid 0\le i < k-1\}$ and $\{(s_{k-1},(\alpha_k,u),q)\}$.

Finally, we let $F'(q)=\eps$ for all $q\in F$, and leave the final state function undefined otherwise.
\end{observation}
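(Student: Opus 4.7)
The plan is to verify that $\mathcal{L}(T) = \{(\pi(\varphi),\varphi) \mid \varphi \in \lang(A)\}$ by exhibiting a bijective correspondence between accepting runs of $A$ and accepting runs of $T$ that is input/output preserving in the required sense. The structural property driving everything is that the intermediate states $s_1,\ldots,s_{k-1}$ introduced for a transition $p\xto{u}q$ are created \emph{freshly} for that transition, so they are private to it: each $s_i$ has exactly one incoming and one outgoing transition in $T$, forming a deterministic chain from $p$ to $q$ that reads the letters of $u$ one at a time and produces output $\eps,\ldots,\eps,u$.

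Forward direction: take an accepting run $\rho = q_0 \xto{u_1} q_1 \xto{u_2} \cdots \xto{u_n} q_n$ in $A$ with $q_n \in F$, representing $\varphi = u_1 \tok \cdots \tok u_n \in \lang(A)$. Replace each transition $q_{i-1} \xto{u_i} q_i$ by its corresponding chain in $T$ of length $|u_i|$. Concatenating these chains yields a run of $T$ that reads $u_1 u_2 \cdots u_n = \pi(\varphi)$ and outputs the sequence $u_1 u_2 \cdots u_n$ (as tokens in $\Gamma^*$), which is precisely $\varphi$. Since $q_n \in F$, $F'(q_n)=\eps$ is defined, so the run is accepting and the pair $(\pi(\varphi),\varphi)$ lies in $\mathcal{L}(T)$.

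Reverse direction: consider any accepting run of $T$. Since $F'$ is defined only on states of $Q \cap F$, the run starts at $q_0 \in Q$ and ends at some $q \in F$. Because every intermediate state added by the construction has in-degree one and out-degree one and lies on a unique chain between two $Q$-states, the run decomposes uniquely into consecutive maximal sub-runs whose endpoints lie in $Q$ and whose internal states are intermediate states. Each such sub-run is exactly the chain corresponding to some transition $p \xto{u_i} q'$ in $A$; it reads the letters of $u_i$ concatenated and outputs the single token $u_i$. Reassembling these sub-runs produces a genuine run of $A$ from $q_0$ to $q \in F$, whose tokenization $\varphi$ matches the output of the $T$-run and whose $\pi$-image matches its input.

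The only edge case is $k=1$, where no intermediate states are added and a single $T$-transition both consumes the symbol and emits the corresponding token; this is handled uniformly by the construction and does not affect the decomposition argument. The main conceptual obstacle, which the above resolves, is justifying that decomposition of $T$-runs into $A$-transitions is \emph{unambiguous}; this reduces to the freshness of the intermediate states in the construction.
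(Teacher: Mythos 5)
Your proof is correct and takes essentially the same approach as the paper's own argument (which appears as the first part of the proof of Lemma~\ref{lem:func}): decompose each accepting run of $T$ into maximal sub-runs whose endpoints lie in $Q$ and whose interior consists of intermediate states, identify each such sub-run with the chain of a unique transition of $A$, and reassemble, with the converse direction obtained by replacing each $A$-transition by its chain. Your explicit appeal to the freshness of the intermediate states (in-degree and out-degree one) merely spells out what the paper compresses into ``by construction.''
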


\begin{lemma}
    \label{lem:func}%
    The transducer $T$ obtained by applying Observation~\ref{obs:build-trans} to $A$ has $\lang(T)=\{(\pi(\varphi), \varphi) \mid \varphi \in \lang(A)\}$, and $T$ is necessarily functional if $A$ is context-invariant.
\end{lemma}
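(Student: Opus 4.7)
The plan is to first establish the stated set equality by exhibiting a bijection between accepting runs of $A$ and accepting runs of $T$, and then to deduce functionality from context-invariance of $A$.

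For the inclusion $\{(\pi(\varphi), \varphi) \mid \varphi \in \lang(A)\} \subseteq \lang(T)$, I would take an accepting run $q_0 \xto{u_1} \cdots \xto{u_n} q_n$ in $A$ with $q_n \in F$. By the construction in Observation~\ref{obs:build-trans}, each transition $q_{i-1} \xto{u_i} q_i$ with $u_i = \alpha_{i,1}\cdots\alpha_{i,k_i}$ corresponds to a chain in $T$ from $q_{i-1}$ to $q_i$ whose successive input labels spell $u_i \in \Sigma^*$ and whose outputs are all $\eps$ except the last step, which emits the token $u_i$. Concatenating these chains gives an accepting run in $T$ (since $F'(q_n)=\eps$ is defined iff $q_n \in F$) reading $\pi(u_1 \tok \cdots \tok u_n)$ and outputting $u_1 \cdots u_n$, i.e., $\varphi$.

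For the reverse inclusion, the key observation is that each intermediate state $s_j$ introduced by the construction is fresh, belongs to exactly one chain, and has in-degree and out-degree one, with both transitions lying inside that chain. Hence, whenever an accepting run in $T$ enters such a state, it is forced to finish the chain before returning to a state in $Q$. This yields a unique decomposition of any accepting $T$-run into maximal chains, each of which corresponds to a single original transition in $A$. Stitching these transitions together produces an accepting run in $A$ (using $F'$ being defined exactly on $F$), reading the tokenization $\varphi$ whose input spelling is $\pi(\varphi)$ and whose concatenated output is $\varphi$ viewed over $\Gamma^*$.

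For functionality, suppose $(w, \varphi_1), (w, \varphi_2) \in \lang(T)$. By the equality just established, $\varphi_1, \varphi_2 \in \lang(A)$ with $\pi(\varphi_1) = \pi(\varphi_2) = w$, so there exist accepting runs $q_0 \xto{\varphi_1} p_1$ and $q_0 \xto{\varphi_2} p_2$ in $A$; context-invariance (Definition~\ref{def:pref_invariant}) then forces $\varphi_1 = \varphi_2$. The main obstacle is the precise bookkeeping in the reverse inclusion, ensuring that the unique-chain-decomposition property really follows from the fact that intermediate states are fresh and chain-local; once that is laid out cleanly, the rest is immediate.
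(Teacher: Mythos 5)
Your proposal is correct and follows essentially the same route as the paper's proof: both directions are handled by decomposing accepting runs of $T$ into maximal subruns (chains) between states of $Q$, each corresponding to a single transition of $A$, and functionality is then deduced exactly as you do, from $\pi(\varphi_1)=\pi(\varphi_2)$ plus context-invariance. Your explicit bookkeeping about intermediate states being fresh, chain-local, and of in-/out-degree one is a slightly more detailed justification of the decomposition step that the paper leaves implicit.
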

\begin{proof}
    This follows directly from the construction. For an accepting run $q_0 \xto{u|v} q_n$ in $T$, we demonstrate that $\pi(v)=u$ and that there is an accepting run $q_0 \xto{v} q_n$ in $A$. 
    
    Observe that both $q_0$ and $q_n$ must be states in $A$, since none of the intermediate states introduced during the construction of $T$ are initial or accepting. Now, consider any subrun $p \xto{u'|v'} q$ such that $p$ and $q$ are in $A$, but none of its intermediate states. By construction, we know that $|v'|=1$ \emph{and} that $p \xto{v'} q$ is a valid transition in $A$. We can inductively find the accepting run in $A$ by replacing each subrun with the corresponding transition in $A$. The reverse direction proceeds similarly.

    The transducer $T$ is functional for all context-invariant $A$. To see this, take any pair $(x,y)$ and $(x,y') \in \lang(T)$. We already established that $\pi(y)=\pi(y')$, and that $A$ must have runs on both $y$ and $y'$. Since $A$ is context-invariant, it follows that $y=y'$, ensuring that $T$ functional.
\end{proof}

Interestingly, while the transducer $T$ obtained by applying Observation~\ref{obs:build-trans} to token DFA $A$ is functional, it is not necessarily subsequential.

\begin{example}\label{ex:tr_a_tok_a}
    Consider the token DFA $A_1$ illustrated in Fig.~\ref{fig:constructionprocedureex}. Applying Observation~\ref{obs:build-trans} to it produces a transducer where the initial state has two outgoing edges on the input symbol $a$, see Fig.~\ref{fig:tokenizertransducer}(a).

    \begin{figure}[htb]
        \centering
        \begin{tikzpicture}[autbase,node distance=1.25cm]
            \node[state, initial,accepting] (q0) {$q_0$};
            \node[state, above of=q0] (s2)  at ($(q0)+(0cm,0.8cm)$) {$s_1$};
            \node[state, right of=q0,accepting] (q1) at ($(q0)+(0.8cm,0cm)$) {$q_1$};
            
            \draw (q0) edge[bend left, left,pos=0.7] node{$a|\varepsilon$} (s2);
            \draw (s2) edge[bend left, right,pos=0.3] node{$a|aa$} (q0);
            \draw (q0) edge[loop below] node{$b|b$} (q0);
            \draw (q0) edge[above,bend left,pos=0.6] node {$a|a$} (q1);
            \draw (q1) edge[below,bend left,pos=0.4] node {$b|b$} (q0);

            \node[xshift=-1cm, yshift=-1.2cm] (label1) at (q0) {(a)};
            \node[xshift=4cm, yshift=-1.2cm] (label1) at (q0) {(b)};
            \node[state, initial,accepting] (q0) at ($(q0)+(5cm,0cm)$) {$q_0$};
            \node[state, right of=q0,accepting] (q1) at ($(q0)+(1.2cm,0cm)$) {$q_1$};
            \node[state,draw=none] (q2) [right of=q1] {};
            
            \draw (q0) edge[loop below] node{$b|b$} (q0);
            \draw (q0) edge[above,bend left,pos=0.5] node {$a|\eps$} (q1);
            \draw (q1) edge[below,bend left,pos=0.5] node {$a|aa$, $b|b$} (q0);
            \draw (q1) edge[above,pos=0.5] node {$a$} (q2);

        \end{tikzpicture}
        \caption[Tokenization transducer for token DFA $A_1$]{Two tokenizing transducers describing $\mathbb{T}^D$ for $D=[a \tok a]$. The corresponding universal token DFA over $\Sigma=\{a,b\}$ is $A_1$ in Figure~\ref{fig:constructionprocedureex}. In (a) the transducer obtained by applying Observation~\ref{obs:build-trans} directly to $A_1$, is given. The underlying input automaton of this transducer is not deterministic, since reading $a$ at $q_0$ can take us both to the new intermediate state $s_1$ and to $q_1$. In (b) an equivalent transducer with a deterministic underlying input automaton is given, i.e.\ it is subsequential. It is, however, not sequential, note that $q_1$ is accepting; the extra arrow denotes that the accepting function has $F(q_1)=a$ (all other states $q$ drawn accepting have $F(q)=\eps$).}
        \label{fig:tokenizertransducer}
    \end{figure}

\end{example}

The remainder of this section aims to show that although Observation~\ref{obs:build-trans} does not yield a subsequential transducer, such transducer must exist. Specifically, we leverage the fact that a (partial) function with bounded variation can be realized by a subsequential transducer.

\begin{definition}[Bounded Variation] A partial function $f\colon A^*\to B^*$ has bounded variation if and only if \[\forall k\geq 0, \exists K \geq 0, \forall u, v\in dom(f)\ d(u,v)\leq k \implies d(f(u),f(v))\leq K,\]
where the function $d(x,y)$ for elements $x,y$ over any alphabet is defined as
\[d(x,y)=|x|+|y|-2|x\land y|,\]
and $x\land y$ denotes the longest common prefix of $x$ and $y$.
\end{definition}

Given two strings $u,v \in \Sigma^*$ and a proper dictionary $D$, it is not necessarily the case that $\mathbb{T}^D(u\land v)=\mathbb{T}^D(u)\land \mathbb{T}^D(v)$.
\begin{example}
    Let $u=ababa$ and $v=ababb$, and $D=[a\tok b, ab\tok a]$ then $\mathbb{T}^D(u\land v)=\mathbb{T}^D(abab)=ab\tok ab$ but $\mathbb{T}^D(u)\land \mathbb{T}^D(v)=(ab\tok aba) \land (ab\tok ab\tok b)=ab$.
\end{example}

\begin{lemma}\label{lem:tokenhasboundedvariation}
    For any proper dictionary $D$, the function $\mathbb{T}^D$ has bounded variation.
\end{lemma}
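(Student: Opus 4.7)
Plan: let $M$ denote the maximum token length in the token alphabet $\Gamma$ of $D$. Given $k\geq 0$ and strings $u,v\in\Sigma^*$ with $d(u,v)\leq k$, write $u = w\hat u$ and $v = w\hat v$ where $w = u\land v$, so that $|\hat u|+|\hat v|\leq k$. The plan is to bound $d(\mathbb{T}^D(u),\mathbb{T}^D(v))$ by a constant depending only on $k$ and $D$.

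The first step is a reduction from the token-level distance to a character-level one. Letting $L = |\pi(\mathbb{T}^D(u)\land \mathbb{T}^D(v))|$ denote the character length of the common tokenization prefix, and using that every token has length at least $1$, the tokens of $\mathbb{T}^D(u)$ beyond this common prefix cover $|u|-L$ characters and hence number at most $|u|-L$; symmetrically for $v$. Thus
\[
d(\mathbb{T}^D(u),\mathbb{T}^D(v)) \leq (|u|-L) + (|v|-L) = 2(|w|-L) + (|\hat u|+|\hat v|) \leq 2(|w|-L) + k,
\]
so it suffices to bound $|w|-L$ by a constant depending only on $D$.

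The main step then invokes the construction of Section~\ref{sec:constructionprocedure}. I would apply Theorem~\ref{thm:all-of-it} to the universal language $\Sigma^*$ to obtain a context-invariant token DFA $A$ with $\mathcal{L}(A) = \mathbb{T}^D(\Sigma^*)$, which by Corollary~\ref{cor:preservation_locality} has degree of 1-locality equal to $1$. The key observation is that context-invariance (Definition~\ref{def:pref_invariant}) forces any common token boundary of $\mathbb{T}^D(u)$ and $\mathbb{T}^D(v)$ at a character position $p \leq |w|$ to upgrade to a full agreement of the two tokenizations up to $p$: both tokenization prefixes are runs of $A$ from $q_0$ reading the same length-$p$ prefix of $w$, and Definition~\ref{def:pref_invariant} forces them to coincide. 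Hence once the two tokenizations diverge past the common prefix of length $L$, no further common token boundary can occur inside $w$. Combined with the fact that in each tokenization the token straddling (or ending at) position $|w|$ has length at most $M$, I would then conclude $|w|-L<M$, yielding the final bound $K = 2M + k$.

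The crux, and what I expect to be the main obstacle, is the last inequality $|w|-L<M$. The easy half of the argument shows only that each of $\mathbb{T}^D(u),\mathbb{T}^D(v)$ must have a boundary in $(|w|-M,|w|]$ and that no boundary in $(L,|w|]$ can be common to both. To close the gap I would analyse the two first-divergent tokens, say $t$ in $\mathbb{T}^D(u)$ and $s$ in $\mathbb{T}^D(v)$, both starting at position $L+1$: if $|w|-L\geq M$ both would fit entirely inside $w$, and I would need to rule this out by exploiting the deterministic choice prescribed by Algorithm~\ref{alg:hf} together with context-invariance to force an earlier common boundary in $w$, contradicting maximality of $L$.
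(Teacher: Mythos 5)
Your opening reduction is sound: writing $u=w\hat u$, $v=w\hat v$ and letting $L=|\pi(\mathbb{T}^D(u)\land\mathbb{T}^D(v))|$, the bound $d(\mathbb{T}^D(u),\mathbb{T}^D(v))\le 2(|w|-L)+k$ is correct, and it parallels the first step of the paper's own proof. The gap is exactly the step you flag as the crux, and it is worse than unproven: the inequality $|w|-L<M$ is false. Take $\Sigma=\{a,b,c,d,e\}$, the proper dictionary $D=[c\tok d,\ b\tok cd,\ a\tok b,\ ab\tok c]$, $u=abcd$ and $v=abce$. Then $\mathbb{T}^D(u)=a\tok bcd$ while $\mathbb{T}^D(v)=abc\tok e$, so $w=abc$, $L=0$, and $|w|-L=3\ge M=3$. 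The token boundaries inside $w$ interleave (after position $1$ in $u$'s tokenization, after position $3$ in $v$'s) without ever coinciding; this is precisely the scenario your context-invariance argument permits but cannot exclude, and chaining further rules of this kind pushes the divergence back through several tokens (up to $|D|$ of them). So no bound depending only on the maximal token length $M$ can be extracted this way: context-invariance (equivalently, uniqueness of tokenizations) is simply too weak to control how far back a suffix change can perturb a tokenization.

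The missing ingredient is a quantitative statement about the BPE process itself, not about the token DFA, and it is what the paper's proof invokes: the dictionary look-ahead bound of~\cite{berg2023}, by which $\mathbb{T}^D(wx')=\varphi\tok\mathbb{T}^D(w'x')$ where $\varphi$ retains all but at most the last $|D|$ tokens of $\mathbb{T}^D(w)$, hence $|w'|\le|D|\cdot t$ with $t$ the maximal token length. With this, your reduction goes through essentially verbatim with $|w|-L\le|D|\cdot t$ in place of $|w|-L<M$, yielding $K=2|D|\,t+k$, which is the paper's constant; note the factor $|D|$, absent from your bound, is essential. To complete your proof you would either cite that look-ahead bound or reprove it by induction over the rules of $D$ (an argument about Algorithm~\ref{alg:hf}); it does not follow from context-invariance or $1$-locality of the automaton of Section~\ref{sec:constructionprocedure}.
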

\begin{proof}
    Let $t\coloneqq \max\{|uv|\mid u\tok v\in D\}$ be the maximum token length in dictionary $D$. 
    
    Fix $k\geq 0$, and consider two arbitrary strings $x,\ y$ such that $d(x,y)\leq k$. Suppose $w=x\land y$ is the longest common prefix, then 
    \[x=wx',\quad y=wy',\]
    for some suffix $x'$ and $y'$. Observe that $d(x,y)=|x'|+|y'|$, thus $|x'|+|y'|\leq k$. 
    
   By the dictionary look-ahead bound in~\cite{berg2023}, we have
    \[\mathbb{T}^D(x)=\mathbb{T}^D(wx')=\varphi\tok\mathbb{T}^D(w'x'),\]
    and 
    \[\mathbb{T}^D(y)=\mathbb{T}^D(wy')=\varphi\tok\mathbb{T}^D(w''y'),\]
    for some token sequence $\varphi$, satisfying $|\varphi|\geq |\mathbb{T}^D(w)|-|D|$ and $|w'|,\ |w''|\leq |D|\cdot t$. Note that if $\mathbb{T}^D(w)|\leq|D|$, then $\varphi$ may be empty.

    We bound variation of $\mathbb{T}^D$ as follows:
    \begin{align*}
        d(\mathbb{T}^D(x),\mathbb{T}^D(y))&=d(\mathbb{T}^D(w'x'),\mathbb{T}^D(w''y'))\\
        &\leq |\mathbb{T}^D(w'x')|+|\mathbb{T}^D(w''y')|\\
        &\leq |w'x'|+|w''y'|
        =|w'|+|x'|+|w''|+|y'|\\
        &\leq 2|D|\cdot t+k.
    \end{align*}
    Since $|D|$ and $t$ depend only on the fixed dictionary $D$, $K\coloneqq  2|D|\cdot t+k$ is a constant. Hence, $\mathbb{T}^D$ has bounded variation.
\end{proof}

\begin{example}
    While $\mathbb{T}^D$ has bounded variation, even small modifications to the tokenization semantics can cause it to violate this property. For example, consider changing ``minimizing $|\phi|$'' into ``maximizing $|\phi|$'' on line 5 in Algorithm~\ref{alg:hf} (i.e.\ rules now apply right-most instead of left-most). Now, take $D=[a \tok a]$, and consider the strings $a,aa,aaa,\ldots$\,. For $n\in \nat$, the correct tokenization of $a^n$ using these semantics is  $a\tok aa \tok aa \tok \cdots $ if $n$ is odd, and $aa\tok aa\tok \cdots$ if $n$ is even. Thus, intuitively, a transducer must determine if the input length is even or odd \emph{before} it can produce the first output token. However, a subsequential transducer cannot `guess' nor read the entire input before outputting any tokens. If the transducer processes the entire input first, it must keep track of the number of $aa$ tokens to be output at the end, which is impossible with a finite set of states. This argument can be made firm by a straightforward pumping argument.
\end{example}

\begin{example}
    Observe that $\mathbb{T}^D$ for a proper dictionary $D$ does not necessarily have a corresponding \emph{sequential} transducer. Consider the dictionary $D=[a\tok a]$ again (see Figure~\ref{fig:tokenizertransducer}), a sequential transducer can never output the token $a$, as it has no way to recognize the end of the string. Instead, a subsequential transducer for this relation would have $F(q)=a$ for some accepting state $q$ (see Example~\ref{ex:tr_a_tok_a}).
\end{example}

Lemma~\ref{lem:tokenhasboundedvariation}, it follows that we can obtain a subsequential transducer from the one produced in Observation~\ref{obs:build-trans}.
\begin{theorem}
    \label{thm:det-trans}%
    For all proper dictionaries $D$ there exists a subsequential string-to-string transducer $T'$ which describes the function $\mathbb{T}^D$.
\end{theorem}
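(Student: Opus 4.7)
The plan is to combine the constructions and results already established in the paper with Choffrut's classical characterization of subsequential transducers. First, I would take the universal token DFA $A$ for $\mathbb{T}^D(\Sigma^*)$, obtained by applying Theorem~\ref{thm:all-of-it} with $L=\Sigma^*$ (starting from the one-state universal string DFA over $\Sigma$). By that theorem, $\mathcal{L}(A)=\mathbb{T}^D(\Sigma^*)$, and by Lemma~\ref{lem:prefixpreserved}, $A$ is context-invariant. Then apply Observation~\ref{obs:build-trans} to produce a transducer $T$ recognizing $\{(\pi(\varphi),\varphi)\mid \varphi\in\mathcal{L}(A)\}=\{(w,\mathbb{T}^D(w))\mid w\in\Sigma^*\}$, which is exactly the graph of the function $\mathbb{T}^D$. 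By Lemma~\ref{lem:func}, $T$ is functional, so $\mathbb{T}^D$ is a rational function.

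Next, I would invoke Lemma~\ref{lem:tokenhasboundedvariation} to conclude that $\mathbb{T}^D$ has bounded variation. At this point the two hypotheses required by Choffrut's theorem are in place: $\mathbb{T}^D$ is a rational function (realized by the functional transducer $T$), and it has bounded variation. Choffrut's theorem then states that any rational function with bounded variation is realizable by a subsequential transducer. Applying it yields the desired subsequential transducer $T'$ with $\mathcal{L}(T')=\mathcal{L}(T)=\{(w,\mathbb{T}^D(w))\mid w\in\Sigma^*\}$, so $T'$ describes the function $\mathbb{T}^D$.

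The main obstacle, or rather the main subtlety, is that the theorem does not follow from any of the paper's own constructions directly: the naive determinization illustrated in Example~\ref{ex:tr_a_tok_a} is not a general recipe, as shown by Observation~\ref{obs:build-trans} producing non-subsequential transducers in general. The whole point of establishing Lemma~\ref{lem:tokenhasboundedvariation} is precisely to make Choffrut's characterization available. I would therefore be careful to cite Choffrut's theorem explicitly (noting that $\mathbb{T}^D$ is total and that the alphabets are finite, so there is no issue with the domain), and to remark that this is a pure existence result — a concrete subsequential construction (e.g., by a determinization procedure guided by the look-ahead bound from~\cite{berg2023}) would be interesting but is not required for the theorem as stated. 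The proof itself is then a short appeal to the three ingredients above.
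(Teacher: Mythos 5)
Your proposal is correct and follows essentially the same route as the paper: the paper's proof likewise combines Lemma~\ref{lem:tokenhasboundedvariation} (bounded variation) with the classical characterization of subsequential functions (Choffrut's theorem, which the paper accesses via~\cite{beal2002}), applied to the functional transducer produced by Observation~\ref{obs:build-trans}. Your write-up merely makes explicit the ingredients (universal token DFA via Theorem~\ref{thm:all-of-it}, functionality via Lemma~\ref{lem:func}) that the paper leaves implicit in its two-sentence proof.
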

\begin{proof}
    Follows since $\mathbb{T}^D$ has bounded variation~\cite{beal2002}. \cite{beal2002} also offers an algorithm that can be used to obtain $T'$ from a transducer $T$ constructed according to Observation~\ref{obs:build-trans}.
\end{proof}

Theorem~\ref{thm:det-trans} implies that we can improve the common priority queue implementation of tokenization, used e.g.\ in the SentencePiece tokenizer, which runs in time $\mathcal{O}(n\log n)$ for a string of length $n$.
\begin{corollary}
    Fixing a proper dictionary $D$, tokenization of a string of length $n$ with $D$ can be done in time $\mathcal{O}(n)$ by string-to-string transduction.
\end{corollary}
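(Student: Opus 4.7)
The plan is to invoke Theorem~\ref{thm:det-trans} to obtain a fixed subsequential transducer $T'=(Q,\Sigma\times(\Gamma\cup\{\eps\}),q_0,\delta,F)$ which realizes $\mathbb{T}^D$, and then to argue that running $T'$ on an input $w\in\Sigma^*$ with $|w|=n$ takes time linear in $n$. Since the dictionary $D$ is fixed, every parameter of $T'$ (in particular $|Q|$, $|\Gamma|$, and the maximum token length $t=\max\{|uv|\mid u\tok v\in D\}$) is a constant; this is essential and should be spelled out, because the ``hidden constant'' in $\mathcal{O}(n)$ may depend on $D$.

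First I would note that subsequentiality gives, for every pair $(p,a)\in Q\times\Sigma$, at most one transition $p\xto{a|v}p'$ in $\delta$. Using a constant-time lookup table on $(p,a)$ (valid since $Q$ and $\Sigma$ are finite and $D$ is fixed), the transducer advances one step per input symbol in $\mathcal{O}(1)$ time, so the $n$ input symbols are consumed in $\mathcal{O}(n)$ time and a possible final application of $F$ adds at most a constant.

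Next I would account for the output. Each transition emits either $\eps$ or a single token of length at most $t$, and the terminal output $F(q)$ is one of finitely many fixed tokenizations whose total length is bounded by a constant. Moreover, the total number of emitted tokens cannot exceed $n$, since $\mathbb{T}^D(w)$ is a tokenization of $w$ and each token contributes at least one symbol of $\Sigma$ to $\pi(\mathbb{T}^D(w))=w$. Consequently the output has total length $\mathcal{O}(n)$ and can be written to the output tape in $\mathcal{O}(n)$ time.

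The main (and essentially only) obstacle is being careful that the per-step cost really is constant in the intended machine model: this requires that the transition function of $T'$ be preprocessed into a direct-access table indexed by $(p,a)$, and that each emitted token be represented by a pointer/identifier of constant size, rather than being recopied symbol by symbol. Once that is arranged, combining the $\mathcal{O}(n)$ traversal cost with the $\mathcal{O}(n)$ output cost yields the claimed $\mathcal{O}(n)$ bound, improving on the $\mathcal{O}(n\log n)$ priority-queue implementation used for instance in SentencePiece.
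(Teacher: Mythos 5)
Your proposal is correct and matches the paper's intent exactly: the paper states this corollary as an immediate consequence of Theorem~\ref{thm:det-trans} (with no further written proof), the idea being precisely that a fixed subsequential, real-time transducer processes each input symbol in constant time. Your write-up simply makes explicit the details the paper leaves implicit --- constant parameters for fixed $D$, one table lookup per symbol, and $\mathcal{O}(n)$ total output --- all of which are sound.
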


Fixing the dictionary $D$, results in a constant-sized transducer, though it may have a large state space, potentially making it overly permissive. In addition, pretokenization (splitting on white space) keeps in many practical cases the strings short. For more on tokenization using transduction, also consider the recent paper~\cite{cognetta2024tokenization}, which uses a different approach.

\section{Conclusions and Future Work}\label{sec:conclusion}

To summarize, we have given an algorithm for constructing finite automata which recognize correct BPE tokenizations, proven it correct, demonstrated it efficient, and shown how a subsequential string-to-tokenization transducer can be obtained from it. In immediate applications, the automaton enables efficient validation of tokenization correctness, efficient pattern matching on tokenized text, and equivalence checking of proper BPE dictionaries. The transducer can be used to perform tokenization, and can be composed with other languages in various ways. In general, the results make the many algorithms and closure properties of regular languages available in the tokenized setting.

As this is a practical and currently highly relevant area, there is a lot of potential future work to consider. Section~\ref{sec:complexitybounds} demonstrates that the state complexity of the automata constructed is quite modest; the base alphabet will in practice be Unicode code points, which means that each state may have a very large number of transitions. As such, the encoding of the automata should be done with some care, potentially investigating a link to symbolic automata. Similarly, the transducer produced by Observation~\ref{obs:build-trans} is of modest size, but the state complexity of producing the equivalent subsequential transducer should also be investigated with care.

\paragraph{Acknowledgments} This work was partially supported by the Wallenberg AI, Autonomous Systems and Software Program (WASP) funded by the Knut and Alice Wallenberg Foundation.

\bibliographystyle{splncs04}
\bibliography{references}  

\end{document}